\newtheorem{theorem}{Theorem}
\newtheorem{lemma}[theorem]{Lemma}
\newcommand{\Qsep}{\mathcal{Q}_\omega^{\mathrm{sep}}}
\newcommand{\Qent}{\mathcal{Q}_\omega^{\mathrm{ent}}}
\newcommand{\IcorrMaxSep}{I_\mathrm{corr}^{\omega,\mathrm{sep}}}
\begin{document}

\title{Entanglement in the energy-constrained prepare-and-measure scenario: applications to randomness certification and channel discrimination}

\author{Raffaele D'Avino}
\affiliation{ICFO-Institut de Ci\`encies Fot\`oniques, The Barcelona Institute of Science and Technology,
Av. Carl Friedrich Gauss 3, 08860 Castelldefels (Barcelona), Spain}
\orcid{0009-0009-4592-4075}

\author{Gabriel Senno}
\affiliation{Quside Technologies S.L., C/Esteve Terradas 1, 08860 Castelldefels,
Barcelona, Spain}
\orcid{0000-0002-4788-1664}

\author{Mir Alimuddin}
\affiliation{ICFO-Institut de Ci\`encies Fot\`oniques, The Barcelona Institute of Science and Technology,
Av. Carl Friedrich Gauss 3, 08860 Castelldefels (Barcelona), Spain}
\orcid{0000-0002-5243-085X}

\author{Antonio Acín}
\affiliation{ICFO-Institut de Ci\`encies Fot\`oniques, The Barcelona Institute of Science and Technology,
Av. Carl Friedrich Gauss 3, 08860 Castelldefels (Barcelona), Spain}
\affiliation{ICREA - Instituci\'o Catalana de Recerca i Estudis Avan\c{c}ats, 08010 Barcelona, Spain}
\orcid{0000-0002-1355-3435}

\maketitle

\begin{abstract}
Quantum information tasks are often analyzed under varying trust assumptions about the devices involved. The semi-device-independent (SDI) framework offers a balance between needed assumptions and experimental feasibility. In this work, we study the energy-constrained SDI scenario, where the only assumption in a prepare-and-measure setup is an upper bound on the energy of the prepared quantum states. In contrast to previous studies that restricted the preparation and measurement devices to be classically correlated, we show that allowing entanglement strictly enlarges the set of achievable correlations. We identify two operational consequences of this result. The first concerns randomness certification, where we show that allowing the adversary to employ entangled strategies may significantly reduce the amount of certifiable randomness. This includes situations where the amount of randomness drops to zero in the presence of entanglement, while it remains positive when entanglement is excluded. Second, for the task of distinguishing an arbitrary quantum channel from the identity, we show that the known dimension-independent bound on the advantage conferred by entanglement is violated under an energy constraint.
\end{abstract}

\section{Introduction}
    Quantum information processing tasks are often analyzed under different assumptions about the level of trust in the devices used. Two well-established paradigms in this regard are the \emph{device-dependent} and the \emph{device-independent} (DI) frameworks. In the device-dependent setting, one assumes tomographically complete knowledge of the internal workings of the devices \cite{NielsenChuang2010}, including precise characterizations of the quantum states and measurements involved. While this allows for powerful and efficient protocols, the validity of the results heavily relies on an accurate device calibration, which may not always be feasible in practice. This is especially important in cryptographic applications, where any mismatch between theoretical model and implementation represents a potential security loophole. On the other hand, the device-independent approach removes the need for trusting the internal functioning of the devices, relying instead on the violation of Bell inequalities to certify quantum behavior \cite{Bell1964,CHSH1969,Brunner2014RMP}. This framework provides strong security and robustness guarantees \cite{acin2006bell,Barrett2005,Acin2007DIQKD}, but its implementations are often challenging, requiring entangled states, low noise and high detection efficiencies~\cite{Pironio2009DIQKD,Gisin2010Heralded,Brunner2007AsymDetEff,Lim2013LocalBell,Gerhardt2011FakedBell}. 
    
    The demanding experimental requirements behind the implementation of DI protocols motivates the exploration of intermediate models, such as the \emph{semi-device-independent} (SDI) framework, which offers a middle ground: certified security under bounded assumptions—such as finite dimension—or with one-sided trust via steering, among other variants, and is backed by extensive theoretical and experimental work~\cite{Gallego2010DimWitness,PawlowskiBrunner2011SDIQKD,Branciard2012OneSidedDI,Ahrens2012ExpDimWitness,pauwels2025information,Bowles'2014}. Unlike fully device-independent methods, it can be implemented in a \emph{prepare-and-measure} (PM) configuration, and unlike fully device-dependent schemes, it does not require fine-tuned state preparation or measurement calibration. In this work, we focus on the energy-constrained SDI setting, originally introduced in \cite{van2017semi}, where the only assumption is a bound on the average energy of the quantum states involved: a quantity that is physically meaningful and relatively easy to measure in many experimental platforms. This SDI framework has been experimentally applied for the task of randomness generation~\cite{rusca2019self,rusca2020fast,blazquez2025modulator} with the theory developed in~\cite{van2019correlations}. Our main goal is to understand if entanglement provides any advantage in this SDI scenario.

    Most of the results concerning entanglement advantage in PM setups have been obtained under a bounded dimension. A canonical example is superdense coding, where a \(d\)-dimensional quantum message assisted by a maximally entangled \(d\times d\) state reproduces the correlations attainable with a \(d^{2}\)-level classical system~\cite{bennett1992communication}. In this setting of bounded-dimensional communication (either classical or quantum), the power of pre-shared entanglement has extensively been examined \cite{Pawlowski'2010,Cubitt'2010, Leung2012,Frenkel2022,Renner2023,Alimuddin'2023,Bennett'1999,Bennett2002,tavakoli2021correlations,Piveteau'2022}. Moreover, in a recent work \cite{tavakoli2021correlations}, the set of all correlations achievable in such SDI scenario was characterized by a hierarchy of semidefinite programming relaxations. However, note that the dimension assumption is largely a convenient theoretical abstraction. For instance, in photonic communication setups, the prepared bosonic  states are intrinsically infinite-dimensional. Consequently, any assumption that the transmitted information is confined to a limited number of degrees of freedom can never be strictly enforced and only approximated through a detailed characterization of the devices involved.
    
    Here, we study whether entanglement provides any advantage in PM setups where dimension is unbounded. Under no additional assumptions, no advantage can be observed, as any entangled state can be prepared by the source and shared with the receiver during the protocol execution. That is, the correlations obtained in a PM setup using \(d\)-dimensional quantum preparations assisted by a \(d'\!\times\! d''\) entangled state can always be reproduced by \(n\)-dimensional quantum preparations whenever \(n \ge d\,d''\). We show that this equivalence can be amply violated under the presence of an energy constraint: \emph{qubit} communication even assisted by $2\times 2$ entanglement achieves correlations that no amount of unentangled quantum communication can reproduce. We then provide two operational consequences of the entanglement advantage in the energy-bounded PM scenario. 
    
    First, we focus on the task of randomness certification. Here, the goal is to lower bound the amount of private randomness in the outcomes of the measurement device given some observed nonclassical behavior. We show that the amount of certifiable randomness for a fixed input to the preparation box and given some observed violation significantly decreases if the no-entanglement assumption from \cite{van2017semi,van2019correlations} is lifted, even going to zero from some value of the energy bound. In fact, for almost all values of the energy bound's range, there are entangled PM devices whose outcomes for a fixed input are \emph{deterministic} while certified to be intrinsically random in the absence of entanglement~\cite{van2017semi}.
    
    Second, we look at binary quantum channel discrimination. While entanglement can increase the success probability for this task, it is known that if one of the channels is the identity map, the entanglement advantage is bounded by a constant independent of the systems dimensions. We show that if the inputs to the channels are restricted to those that produce outputs with constrained energy, this bound is violated already for qubit channels. In other words, the entanglement advantage is larger under energy constraints.
        
\section{Preliminaries}
    \subsection{Prepare-and-measure scenario}
        In this work, we focus on one of the simplest quantum information processing scenarios: the PM setup. In this scenario, given an input $x \in \mathcal{X}$, a quantum state $\rho_S^x \in \mathcal{D}(\mathcal{H}_S)$ is prepared. Then, conditioned on another input $y \in \mathcal{Y}$, a measurement is performed, described by a positive operator-valued measure (POVM) $\{M^{b|y}_{S}\}_{b \in \mathcal{B}}$, i.e. by operators $M^{b|y}_{S}\in\mathcal{L}(\mathcal{H}_S)$ satisfying $M^{b|y}_{S} \geq 0$ and $\sum_b M^{b|y}_{S} = \mathbb{1}_{S}$ $\forall y$. Given a state $\rho_{S} \in \mathcal{D}(\mathcal{H}_S)$, the probability of obtaining outcome $b$ is given by $\Tr[M^{b|y}_{S} \rho_{S}]$. Here, $\mathcal{D}(\mathcal{H}_S)$ denotes the set of density matrices over the Hilbert space $\mathcal{H}_S$ and $\mathcal{L}(\mathcal{H}_S)$ the set of linear operators over $\mathcal{H}_S$.
        
        Alternatively, the state-preparation step can be modeled as the application of a quantum channel $\Lambda_x: \mathcal{D}(\mathcal{H}_P) \rightarrow \mathcal{D}(\mathcal{H}_S)$ to a fixed initial state $\sigma_P \in \mathcal{D}(\mathcal{H}_P)$, such that $\rho_S^x = \Lambda_x(\sigma_P)$. We assume that the sets $\mathcal{X}$, $\mathcal{Y}$, and $\mathcal{B}$ are finite. The overall process gives rise to outcome statistics of the form 
        \begin{equation}
            p(b |x,y) = \Tr\left[ M^{b|y}_{S˘} \Lambda_x(\sigma_P) \right].
        \end{equation}
        This scenario is illustrated in Figure~\ref{fig:prep_and_meas}.
        \begin{figure}[t]
            \centering
            \includegraphics[width=0.75\linewidth]{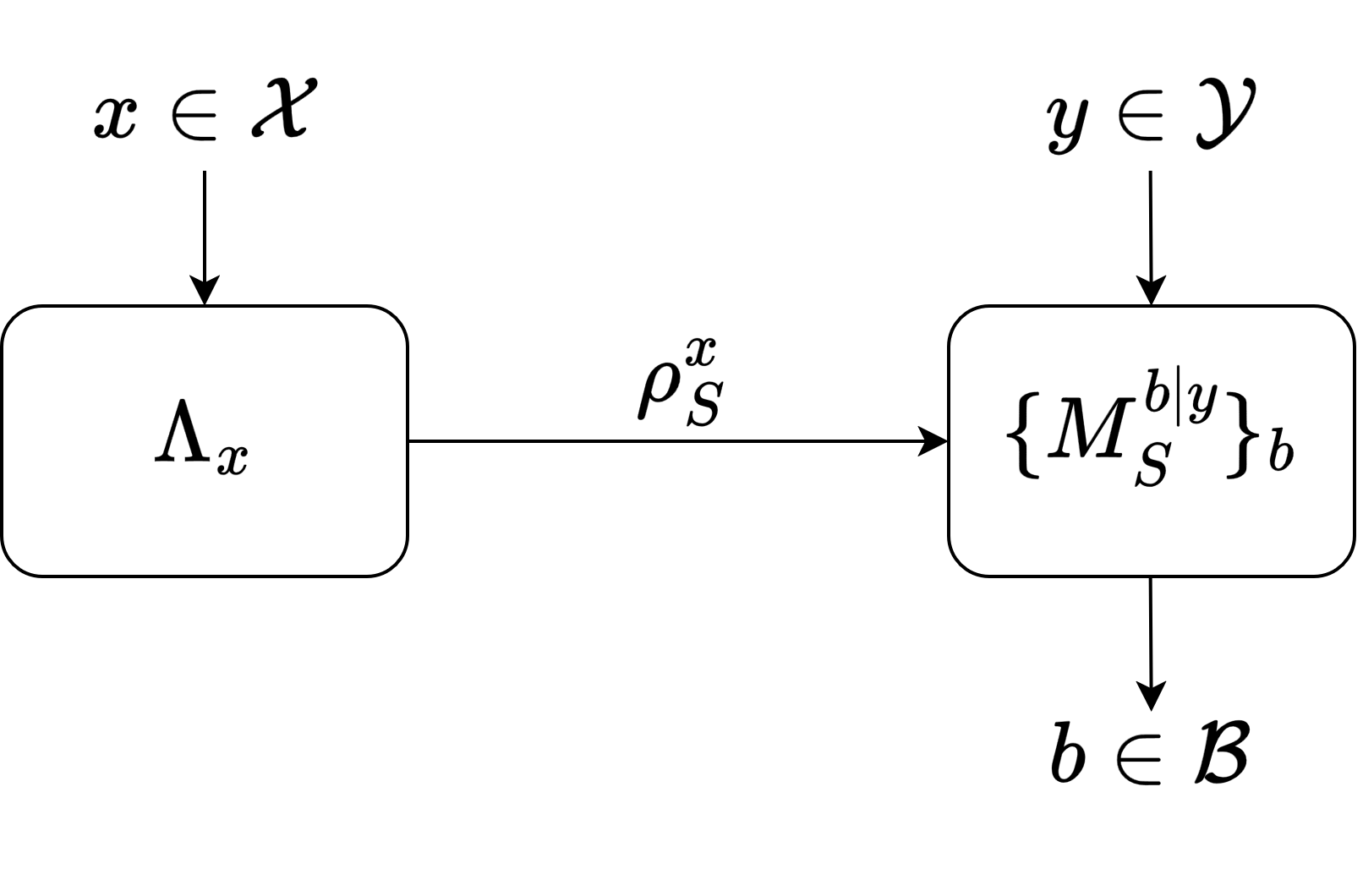}
            \caption{\textbf{PM scenario.} Given an input $x$, a quantum channel $\Lambda_x$ is applied to prepare a state $\rho_S^x$. Subsequently, upon receiving another input $y$, a measurement $\{M^{b|y}_{S}\}_{b}$ is performed on the state, yielding outcome $b$.}
            \label{fig:prep_and_meas}
        \end{figure}
        
        This construction assumes that the preparation and measurement devices are totally independent. More generally, we can consider correlations between them. 
        That is, we may describe the measurement device with a quantum system $M$ such that the composite system $PM$ is in a, in general, correlated quantum state $\sigma_{PM}\in\mathcal{D}(\mathcal{H}_{PM})$. In particular, $\sigma_{PM}$ can be entangled. The measurement is then described by a POVM $\{\Pi_{SM}^{b|y}\}_{b \in \mathcal{B}} \subset \mathcal{L}(\mathcal{H}_{SM})$. The resulting outcome statistics then take the form
        \begin{equation}
            p(b|x,y) = \Tr\left[ \Pi_{SM}^{b|y} (\Lambda_x \otimes \mathbb{1}_M)(\sigma_{PM}) \right].
        \end{equation}
        This scenario is depicted in figure \ref{fig:EA prep_and_meas}.
        \begin{figure}[t]
            \centering
            \includegraphics[width=0.75\linewidth]{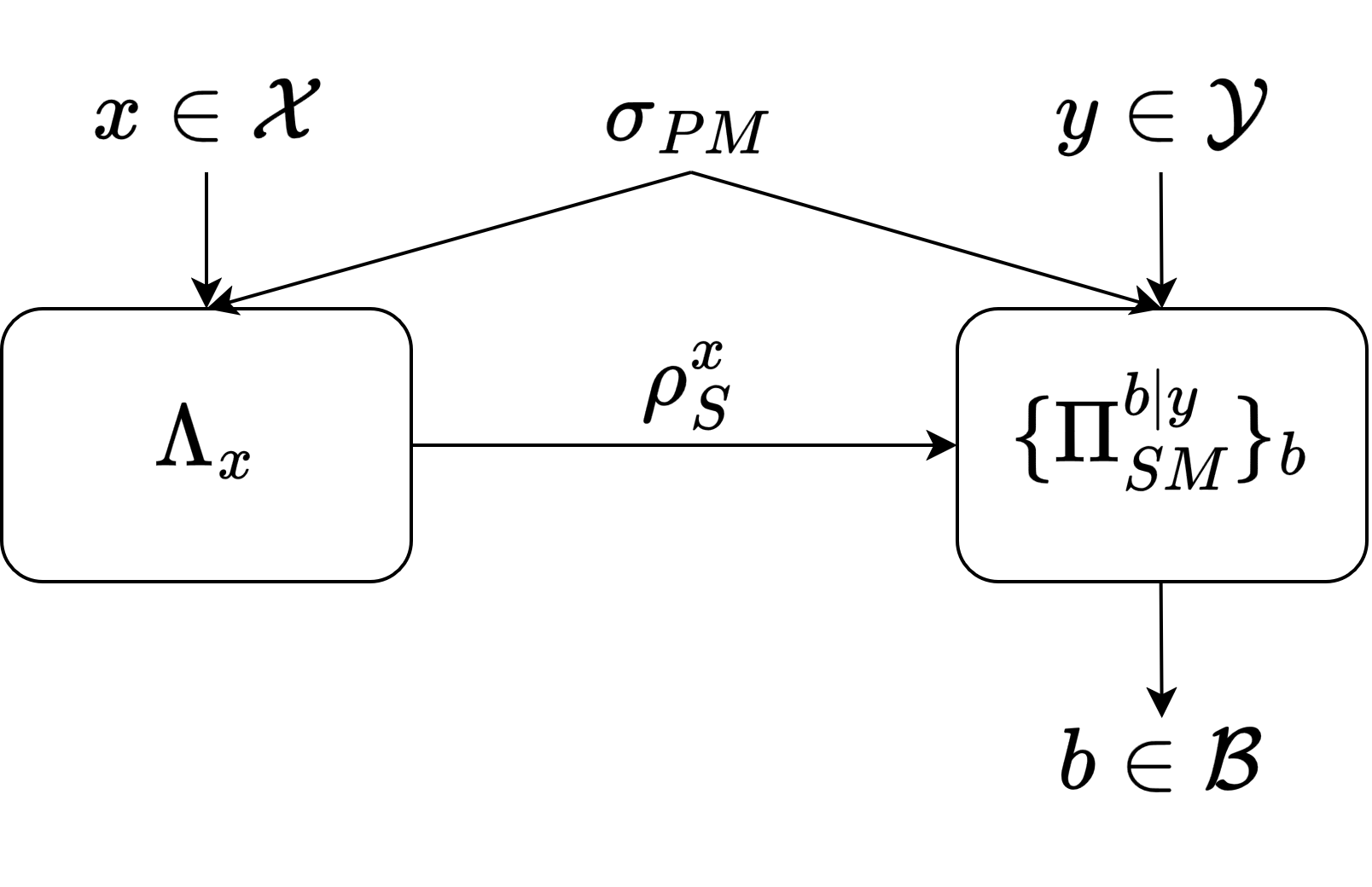}
            \caption{\textbf{Entanglement-assisted PM scenario.} Starting from a shared state $\sigma_{PM}$, channel $\Lambda_x$ acts on $\Tr_M[\sigma_{PM}]$ upon input $x$ to produce $\rho_S^x$, which is then measured with $\{\Pi_{SM}^{b|y}\}_{b}$ upon input $y$, yielding outcome $b$.}
            \label{fig:EA prep_and_meas}
        \end{figure}
        
        In this work, we restrict our analysis to the minimal scenario involving only two inputs $x\in\{0,1\}$ and a single binary measurement $\{\Pi_{SM}^{b}\}_{b\in\{0,1\}}$, the observed correlations being $p(b|x)$, with no measurement input $y$.
    \subsection{Energy-constrained setting}\label{sec:energy-constrained-scenario}
        In the SDI framework introduced in \cite{van2017semi}, the channels $\{\Lambda_x\}_x$ and the measurement $\{\Pi_{SM}^b\}_b$ are uncharacterized. The main assumption is a constraint on the \emph{energy} of the prepared quantum states. More formally, it is assumed that 
        \begin{equation}
            \exists~H~ \text{s.t.} \Tr\left[ H\, \Lambda_x(\sigma_P) \right] \leq \omega_x,\label{eq:energy-constraint}
        \end{equation}
        where $\omega_x\in[0,1/2)$ and $H$ is some hermitian operator (i.e., the Hamiltonian) with a non-degenerate ground state and a unit gap, i.e. $H=\sum_{i\geq 0}E_i\ket{E_i}\bra{E_i}$ with $E_0=0$ and $E_i\geq1$ for all $i\geq1.$ This constraint is physically motivated, as measuring the energy of a quantum system is typically feasible in experimental settings (see, e.g., \cite{van2017semi} for a discussion of how to implement this type of constraints). Notice that no bounds are imposed on the dimension of the underlying Hilbert spaces, which may, in principle, be infinite-dimensional. For simplicity, we will henceforth assume that $\omega_0=\omega_1=\omega$ \footnote{This applies, in particular, to the experiments in \cite{rusca2020fast,blazquez2025modulator}}. By the restrictions imposed on $H$, we can equivalently write the energy constraint as a requirement on the prepared states of having a large overlap with a common ground state $\ket{g}$, that is
        \begin{equation}
            \exists~\ket{g}\in\mathcal{H}_S~\text{s.t.}~\bra{g} \Lambda_x(\sigma_P)\ket{g} \geq 1-\omega.\label{eq:energy-constraint}
        \end{equation}
        We will alternate between these two formulations henceforth. An additional assumption in \cite{van2017semi} is that the PM boxes can only be \emph{classically-correlated}. Formally, this entails restricting the initial shared state $\sigma_{PM}$ to be separable. Given an energy bound $\omega$, we will denote with $\Qsep$ the set of \emph{behaviors} $\mathbf{p}=\{p(b|x)\}_{b,x}$ achievable under this restriction and $\Qent$ when entanglement is allowed. In the next section, we show that
        \begin{equation}\label{eq:separation}
            \Qsep \subsetneq \Qent. 
        \end{equation}
        
\section{Entanglement enlarges the set of energy-constrained correlations}\label{sec:separation}
    
    Let $E_x\in[-1,+1]$ be the \emph{correlators}
    $E_x=p(0|x)-p(1| x)$ and let \begin{equation}\label{eq:icorr}
            I_{\text{corr}}(\mathbf{p}) := |E_0 - E_1|.
        \end{equation}
    In \cite{van2017semi}, it is shown that
        \begin{equation}\label{maxIclasscorr}
        \begin{aligned}
            \forall \mathbf{p}\in\Qsep~&I_{\text{corr}}(\mathbf{p})\leq 4\sqrt{\omega(1 - \omega)}=:I^{\omega,\mathrm{sep}}_{\text{corr}}.
        \end{aligned}
    \end{equation}
    We will prove Eq. \eqref{eq:separation} by finding behaviors $\mathbf{p}\in\Qent$ violating this inequality.

    We performed a seesaw optimization to find lower bounds to the maximum value of Eq. \eqref{eq:icorr} over behaviors in $\Qent$. To address this, we recasted the problem into an equivalent formulation (see Appendix~\ref{App:formulation equivalence}) that is more amenable to analysis. 
The seesaw approach alternates between optimizing over a subset of the positive operators of the problem, while keeping the others fixed, thereby reducing the problem to a sequence of semidefinite programs (SDPs). Each step solves a convex subproblem, and the process is repeated until convergence to a (typically local) optimum.\\

The results obtained are plotted in Fig. \ref{fig:EA_analytic}. As can be seen in the figure,
$$
\max_{\mathbf{p}\in\Qent}I_\mathrm{corr}(\textbf{p})>\IcorrMaxSep,
$$
implying Eq. \eqref{eq:separation}. Within our numerics, the violation is already achieved with a qubit communication system (\(\dim\mathcal{H}_S=2\)) when the preparation and measurement devices share a \(2\times 2\) entangled state (i.e., \(\dim\mathcal{H}_P=\dim\mathcal{H}_M=2\)). Allowing a \(3\times 3\) shared entangled state while keeping the communicated system a qubit yields a strictly larger violation. Notably, we do not observe any higher violations when further increasing any of the systems' dimensions. We thus conjecture that qubit communication assisted by \(3\times 3\) entanglement is optimal. The advantage is illustrated in Fig.~\ref{fig:EA_analytic}. For the \(2\times 2\) case, we also provide an explicit analytical construction that attains the reported value (see Appendix~\ref{App:analytical state}).
\begin{figure}[t]
        \centering
    \includegraphics[width=1\linewidth]{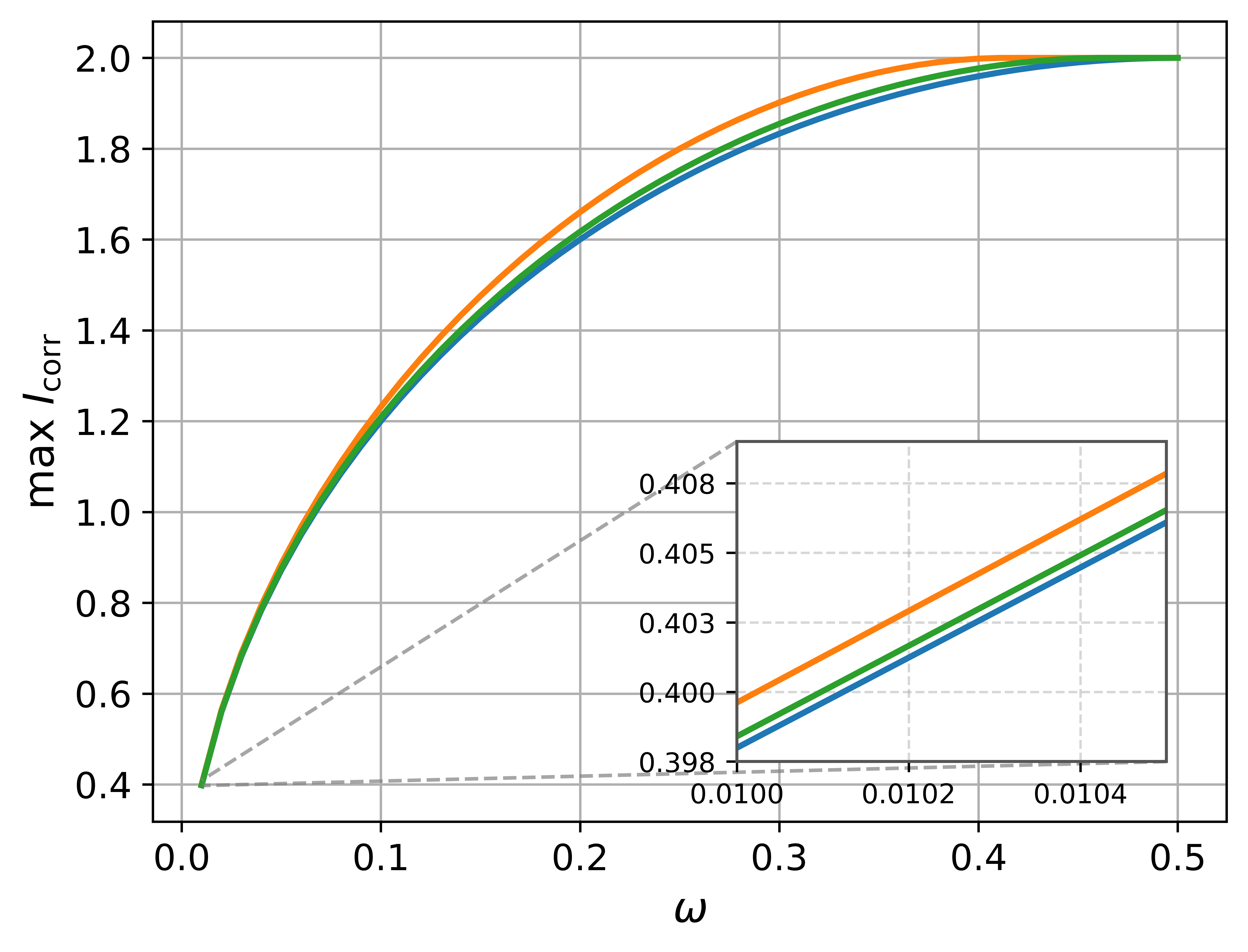}
        \caption{\textbf{Seesaw violations of Eq. \eqref{maxIclasscorr}.}
        In {\color{blue}blue}, the classically correlated bound \(I^{\omega,\mathrm{sep}}_{\mathrm{corr}}\) from Eq.~\eqref{maxIclasscorr} is shown. Curves in \textcolor{ForestGreen}{green} and \textcolor{orange}{orange} display violations of this bound when a qubit communication (\(\dim \mathcal{H}_S=2\)) is assisted by \(2\times 2\) and \(3\times 3\) shared entanglement between the PM devices, respectively.}
        \label{fig:EA_analytic}
    \end{figure}

       For the behaviors we obtained numerically, we have observed that in order to obtain values of Eq. \eqref{eq:icorr} larger than $I^{\omega,\mathrm{sep}}_\mathrm{corr}$ at least one of the states $\{(\Lambda_x\otimes I_M)(\sigma_{PM})\}_x$ must be mixed. Notice that, since we can assume the state $\sigma_{PM}$ to be pure without loss of generality \footnote{Let $\ket{\psi}_{PMM'}$ be a purification of $\sigma_{PM}$, i.e. $\Tr_{M'}[\ket{\psi}\bra{\psi}]=\sigma_{PM}$. Then, for every set of channels $\{\Lambda_x\}_x$, ground state $\ket{g}$ and POVM $\{\Pi_{SM}^b\}_b$, if we let $\rho_{SMM'}^x=\Lambda_x\otimes\mathbb{1}_{MM'}(\ket{\psi}\bra{\psi})$, we have that
       $\Tr[\ket{g}\bra{g}\otimes \mathbb{1}_{MM'}\rho_{SMM'}^x]=\Tr[\ket{g}\bra{g}\otimes \mathbb{1}_{M}(\Lambda_x\otimes\mathbb{1}_{M})\sigma]$
       and $\Tr[\Pi_{SM}^b\otimes \mathbb{1}_{M'}\rho_{SMM'}^x]=\Tr[\Pi_{SM}^b(\Lambda_x\otimes\mathbb{1}_{M})(\sigma)]$.}, 
       that would imply that at least one of the channels $\{\Lambda_x\}_x$ must be nonunitary. Theorem \ref{th:unitary channels} below states that this is indeed necessary. 
    
    \begin{theorem}\label{th:unitary channels}
        Let $\{U_x\}_x$ be unitary operators on $\mathcal{H}_P=\mathcal{H}_S$, $\ket{\psi}_{SM}\in\mathcal{H}_{SM}$ and $\ket{g}_S\in\mathcal{H}_S$ such that 
        \begin{equation*}
        \bra{\psi_x}(\ket{g}\bra{g}\otimes \mathbb{1}_{M})\ket{\psi_x} \geq 1-\omega,
        \end{equation*}
        for $\ket{\psi_x}=(U_x\otimes \mathbb{1}_{M})\ket{\psi}$ and some $\omega\in[0,1/2)$. Then, for every {\rm POVM} $\{\Pi_{SM}^b\}_b$, it holds that
        \begin{equation*}
            \{\bra{\psi_x}\Pi_{SM}^b\ket{\psi_x}\}_{b,x}\in\Qsep.
        \end{equation*}
    \end{theorem}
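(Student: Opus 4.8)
The plan is to prove something slightly stronger than the statement: that the behavior is reproducible by an \emph{unentangled} strategy on $\mathcal{H}_S$ alone, which lies in $\Qsep$ a fortiori. Since there is no measurement input and the outcome is binary, a behavior is fully determined by the two correlators $(E_0,E_1)$ with $E_x=\bra{\psi_x}(\Pi_{SM}^0-\Pi_{SM}^1)\ket{\psi_x}$. Writing $O:=\Pi_{SM}^0-\Pi_{SM}^1$, so that $-\mathbb{1}\le O\le\mathbb{1}$, the task collapses to realizing the same $(E_0,E_1)$ with two energy-constrained pure states on $\mathcal{H}_S$ and one binary measurement.

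\textbf{Key lemma.} The crux is the overlap bound $|\langle\psi_0|\psi_1\rangle|\ge 1-2\omega$. Writing $\langle\psi_0|\psi_1\rangle=\bra{\psi}(U_0^\dagger U_1\otimes\mathbb{1}_M)\ket{\psi}=\Tr[W\rho_S]$ with $W:=U_0^\dagger U_1$ and $\rho_S:=\Tr_M[\ket{\psi}\bra{\psi}]$, the two energy constraints become $\bra{g_x}\rho_S\ket{g_x}\ge 1-\omega$ for $\ket{g_x}:=U_x^\dagger\ket{g}$, which additionally satisfy $W\ket{g_1}=\ket{g_0}$. Thus $\rho_S$ is simultaneously close to $\ket{g_0}$ and $\ket{g_1}$, and I must lower bound $|\Tr[W\rho_S]|$. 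After removing the global phase by passing to $\mathrm{Re}(e^{i\phi}W)$ with $e^{i\phi}\langle g_1|W|g_1\rangle\ge 0$, the delicate point is that the \emph{sharp} constant $1-2\omega$ requires using both constraints together: discarding one and splitting off the pure component of $\rho_S$ along a single $\ket{g_x}$ only yields a bound spoiled by an $O(\sqrt{\omega})$ term. The clean estimate is geometric — in the effective two-dimensional picture the two overlap constraints jointly force the angle between $\rho_S$'s dominant direction and its $W$-image to be at most $2\theta_\omega$, where $\cos\theta_\omega=\sqrt{1-\omega}$, so that $|\Tr[W\rho_S]|\ge\cos 2\theta_\omega=1-2\omega$.

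\textbf{Construction.} With the lemma in hand I would build the unentangled strategy explicitly. Set $t:=|\langle\psi_0|\psi_1\rangle|\ge 1-2\omega$. Because two unit vectors in the ground-state cap $\{\,|\langle g|\chi\rangle|^2\ge 1-\omega\,\}$ can be at most $2\theta_\omega$ apart, and $t\ge\cos 2\theta_\omega$, there exist $\ket{\chi_0},\ket{\chi_1}\in\mathcal{H}_S$ inside the cap — hence satisfying the energy constraint — with $|\langle\chi_0|\chi_1\rangle|=t$, placed symmetrically about $\ket{g}$. Since the achievable set of pairs $(\bra{a}O'\ket{a},\bra{b}O'\ket{b})$ over observables $\|O'\|\le 1$ depends on $\ket{a},\ket{b}$ only through $|\langle a|b\rangle|$, and the target $(E_0,E_1)$ is realized by $\ket{\psi_0},\ket{\psi_1},O$ at overlap $t$, I can conjugate $O$ by the unitary sending $\ket{\psi_x}\mapsto\ket{\chi_x}$ to obtain $O'$ reproducing $(E_0,E_1)$ exactly. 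Preparing $\ket{\chi_x}$ on $\mathcal{H}_S$ and measuring $\{(\mathbb{1}\pm O')/2\}$ is an unentangled, energy-constrained prepare-and-measure strategy, so the behavior lies in $\Qsep$.

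\textbf{Main obstacle.} The hard part is the key lemma, and specifically obtaining the sharp constant $1-2\omega$ rather than a weaker $\sqrt{\omega}$- or triangle-inequality bound: all naive estimates (reduced-state fidelity via Uhlmann's theorem, three-leg Bures-angle triangle inequalities, or splitting off a single pure component) are too lossy, and one must exploit that both prepared states overlap the \emph{same} $\ket{g}$ together with the intertwining relation $W\ket{g_1}=\ket{g_0}$. Everything else — the reduction to $(E_0,E_1)$, the cap geometry, and the numerical-range matching — is routine.
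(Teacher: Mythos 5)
Your overall architecture matches the paper's: everything reduces to the overlap bound $|\braket{\psi_0}{\psi_1}|\ge 1-2\omega$, which is exactly the paper's Lemma~\ref{lemma:overlap} (with $p=1-\omega$), followed by a construction converting that overlap into a separable realization. Your construction step is a legitimate alternative to the paper's: the paper keeps the states and the measurement and builds a common ground state $\ket{\tilde g}$ on the joint system $SM$, whereas you keep $\ket{g}$ and build two new states inside the cap on $\mathcal{H}_S$ together with a measurement transported by a partial isometry; both routes need precisely $t\ge 1-2\omega$ to go through, and yours extends to non-binary POVMs by transporting each element (e.g.\ $\tilde\Pi^b=V P\Pi^b_{SM}P V^\dagger+\tfrac{1}{|\mathcal{B}|}(\mathbb{1}-VV^\dagger)$ with $P$ the projector onto $\mathrm{span}\{\ket{\psi_0},\ket{\psi_1}\}$), so the restriction to correlators is harmless.

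The genuine gap is that you do not prove the key lemma, and the sketch you give would fail in general. The Fubini--Study argument --- $\ket{\phi}$ and $W\ket{\phi}$ are each within angle $\theta_\omega$ of $\ket{g_0}$, hence within $2\theta_\omega$ of each other, giving $|\bra{\phi}W\ket{\phi}|\ge\cos 2\theta_\omega=1-2\omega$ --- is correct only when $\rho_S=\Tr_M[\ket{\psi}\bra{\psi}]$ is \emph{pure}. In general $\rho_S$ is mixed, $\Tr[W\rho_S]=\sum_i\lambda_i\bra{\phi_i}W\ket{\phi_i}$, the constraints $\bra{g_x}\rho_S\ket{g_x}\ge 1-\omega$ hold only \emph{on average} over the eigenvectors (individual $\ket{\phi_i}$ may lie far outside the cap), and the complex contributions $\bra{\phi_i}W\ket{\phi_i}$ can partially cancel; ``$\rho_S$'s dominant direction'' is not a quantity that controls $\Tr[W\rho_S]$. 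This mixed case is where all the work in the paper's Lemma~\ref{lemma:overlap} lies: it compresses $\rho_S$ to the two-dimensional span of $\ket{g}$ and $\ket{h}=U^\dagger\ket{g}$, proves $|\Tr(\sigma T)|\ge 2p-t$ for the compressed block $\sigma$ (of trace $t$) and the contraction $T=\Pi U\Pi$ via an explicit $2\times 2$ computation and Cauchy--Schwarz in the $\sigma$-inner product, and then pays only $1-t$ for the residual block, yielding $2p-1$. Until you supply an argument of this kind for mixed $\rho_S$ --- one that, as you correctly note, must exploit the intertwining $W\ket{g_1}=\ket{g_0}$, since without the local-unitary relation the bound is simply false (e.g.\ $\ket{g}\ket{0}_M$ and $\ket{g}\ket{1}_M$ are orthogonal yet both have zero energy) --- the proof is incomplete at its central step.
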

    The proof of Thm. \ref{th:unitary channels} can be found in Appendix \ref{App:unitary channels}. 
    The fact that entanglement enlarges the set of energy-constrained behaviors has operational consequences. This is the focus of the following section.
    
\section{Operational consequences}
In this section, we show how, under energy constraints, the presence of entanglement impacts the tasks of SDI randomness certification and binary channel discrimination (see \cite{mondal2025classical} for recent, related results on the classical capacity of quantum channels).

    \subsection{Decrease in the amount of certifiable randomness}
    In the task of randomness certification, the goal is to be able to conclude that the measurement outcomes contain some degree of unpredictability (or, intrinsic randomness) solely from the observed statistics and the assumptions on the devices.

    Given $\omega\in[0,1/2)$ and an expected value $I^\mathrm{exp}_\mathrm{corr}$ for Eq. \eqref{eq:icorr}, we will focus on the amount of certifiable randomness for a fixed input $x=x^*$ to the preparation box. This is the quantity of interest in randomness generation protocols of the \textit{spot-checking} type. In such protocols, a small fraction of the rounds are devoted to test the nonclassicality of the device by choosing the inputs at random, while the remaining rounds are left for randomness generation using a predefined input. Without loss of generality, we fix $x^*=0$.
    
    To quantify the amount of intrinsic randomness, we consider an eavesdropper, Eve, who holds a quantum system $E$ potentially correlated with the honest user's PM device. Eve's goal is to \emph{guess} the device's outcome. That is, she performs a measurement $\{M_E^e\}_e$ on her system trying to maximize the correlation with the device's outcome. The probability that Eve guesses correctly is given by
    \begin{equation}\label{eq:Pguess}
        \footnotesize
        \begin{aligned}
        &P_{\text{guess}}(B|E,X=0,I^\mathrm{exp}_\mathrm{corr},\omega)
        :=&&\\ 
        &\operatorname*{max} \limits_{\substack{\{\Lambda_x\}_x,\{\Pi_{SM}^{b}\}_b\\
        \ket{g}, \ket{\psi}, \{M_E^e\}_e}}\sum_e \Tr\Big[
        (\Pi_{SM}^{e}\otimes M_E^e)
        (\Lambda_{0}\otimes\mathbb{1}_{ME})
        (\ket{\psi}\bra{\psi})
        \Big]&&\\
        &\quad~\textrm{subject to } &\\
        &\Tr[(\ket{g}\bra{g}\otimes\mathbb{1}_{ME})
        (\Lambda_x\otimes\mathbb{1}_{ME})
        (\ket{\psi}\bra{\psi})] \ge 1-\omega,&&\\
        &\sum_{b,x}(-1)^{b\oplus x}\Tr\Big[
        (\Pi_{SM}^b\otimes\mathbb{1}_E)
        (\Lambda_x\otimes\mathbb{1}_{ME})
        (\ket{\psi}\bra{\psi})
        \Big] \ge I_\text{corr}^\text{exp}&&
        \end{aligned}
    \end{equation}
where, without loss of generality, we have assumed that the global state $\ket{\psi}_{PME}$ is pure. We then quantify the amount of intrinsic randomness in the device's outcomes via the conditional min-entropy
    \begin{align}
        H_{\min}^*=-\log_2 P_{\text{guess}}(B|E,X=0,I^\mathrm{exp}_\mathrm{corr},\omega)\label{eq:hmin}.
    \end{align}
    
    Let us denote with $H_{\min}^{*,\mathrm{sep}}$ the value of Eq. \eqref{eq:hmin} that results from restricting the reduced state of systems $P$ and $M$ in Eq. \eqref{eq:Pguess} to be separable (i.e. when the devices are only classically correlated). $H_{\min}^{*,\mathrm{sep}}$ can be lower-bounded using the techniques developed in \cite{van2019correlations}. On the other hand, when the initial shared state is arbitrary, we do not have, at the moment, any computational procedure to lower bound $H_\mathrm{min}^*$ (let alone compute). Nevertheless, we can obtain upper bounds by applying a seesaw strategy using a reformulation of Eq. \eqref{eq:Pguess} more amenable for it (see Appendix \ref{App:quantum_guess_prob}). If the obtained upper bounds, for given $\omega\in[0,1/2)$ and $I^\mathrm{exp}_\mathrm{corr}$, are below the corresponding lower bounds for the separable case, this amounts to showing that the assumption of only classicaly-correlated devices in \cite{van2017semi,van2019correlations} indeed leads to an overestimation of the amount of intrinsic randomness generated.
    
    \begin{figure}[t]
        \centering
        \includegraphics[width=1\linewidth]{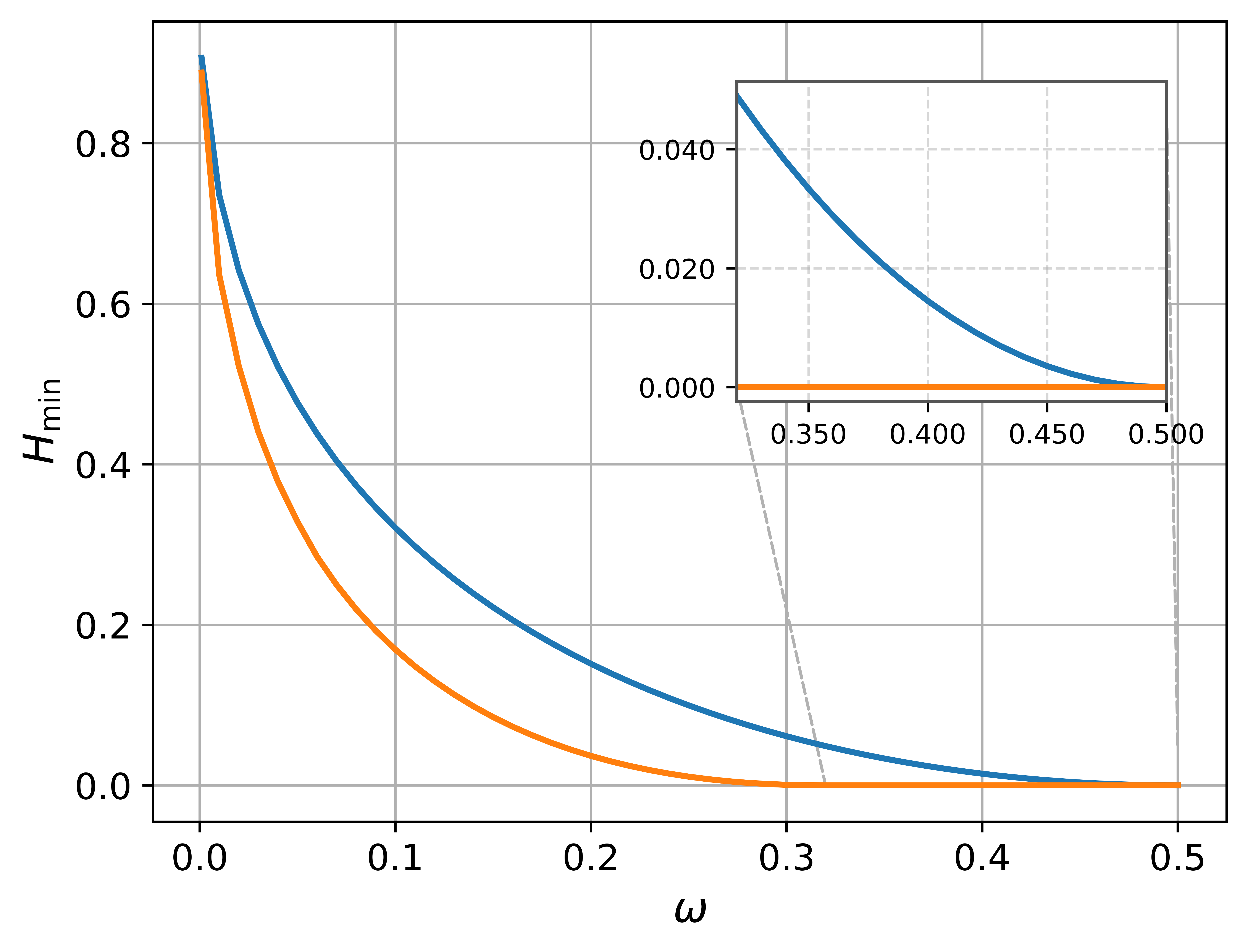}
        \caption{\textbf{$H_\mathrm{min}$ lower bound given a violation $\IcorrMaxSep$}. In {\color{blue}blue}, the certifiable randomness $H^{*,\mathrm{sep}}_\mathrm{min}$ under a no-entanglement assumption coming from \cite{van2019correlations}. In {\color{orange}orange}, a seesaw upper bound on the certifiable randomness $H^{*}_\mathrm{min}$ in the entanglement-assisted scenario.}
        \label{fig:guess_prob}
    \end{figure}

     In Fig. \ref{fig:guess_prob} we plot $H^*_\mathrm{min}$ and $H^{*,\mathrm{sep}}_\mathrm{min}$ for the maximum classicaly-correlated value $\IcorrMaxSep$ of Eq. \eqref{eq:icorr} as a function of $\omega$ \footnote{Notice that, in order to be able to compare $H^*_\mathrm{min}$ and $H^{*,\mathrm{sep}}_\mathrm{min}$ we must choose $I^\mathrm{exp}_\mathrm{corr}\leq \IcorrMaxSep$, because otherwise the optimization for $H^{*,\mathrm{sep}}_\mathrm{min}$ would be infeasible.}. We see that, indeed, the amount of certifiable randomness significantly decreases when the only-classically-correlated-devices assumption is lifted. Remarkably, $H^{*}_\mathrm{min}$ even drops to zero (up to numerical precision) for $\omega\gtrsim 0.32$, while $H^{*,\mathrm{sep}}_\mathrm{min}$ remains nonzero for the whole $\omega$ range. In other words, there are entangled PM devices achieving a value $\IcorrMaxSep$ of Eq. \eqref{eq:icorr} whose outcomes are \emph{completely determined} by Eve's side-information. In fact, something even stronger holds, as we show next.

     In \cite[Sec. 6.1]{van2017semi}, the set $\overline{\mathcal{D}}_{x^*}^\omega$ of correlators $(E_0,E_1)$ which arise from convex combinations of behaviors $\mathbf{p}^\lambda \in\Qsep$ that are \emph{determinstic} on input $x=x^*$, i.e. with $E^\lambda_{x^*}\in\{-1,+1\}$, was shown to be defined by the equations
    \begin{align}
        (E_{x^*}+1)(1-\frac{4\omega}{1+E_{x^*}})^2-E_{\overline{x^*}}&\leq 1\label{eq:determ1},\\
        (E_{x^*}-1)(1-\frac{4\omega}{1-E_{x^*}})^2-E_{\overline{x^*}}&\geq -1,\label{eq:determ2}
    \end{align}
    where $\overline{x^*}=1-x^*.$ As stated in \cite{van2017semi}, if a classically correlated PM device exhibits correlators $(E_0,E_1)$ outside $\overline{\mathcal{D}}_{x^*}^\omega$, this certifies that there is some randomness in the outcomes for input $x^*$, as they cannot be convex decompositions of behaviors deterministic for that input. Let us see now that this conclusion no longer holds when entanglement is allowed.

    Notice that if we restrict to behaviors $\mathbf{p}$ with $p(0|0)=1$ (and, hence, $E_0=1$), Eq. \eqref{eq:determ1} then implies
    \begin{equation}
        E_1\geq 2(1-2\omega)^2-1=:I_\mathrm{det}^{\omega,\mathrm{sep}}\label{eq:deterministic-ineq-for-fixed-input-0}
    \end{equation}
    for all $\mathbf{p}\in\Qsep$.
    In Fig. \ref{fig:deterministic-ineq} we compare $I_\mathrm{det}^{\omega,\mathrm{sep}}$ to seesaw upper bounds to 
    \begin{equation}
        \min_{\mathbf{p}\in\Qent:p(0|0)=1} E_1.\label{eq:minimization-of-E1} 
    \end{equation}
    \begin{figure}[t]
        \centering
        \includegraphics[width=1\linewidth]{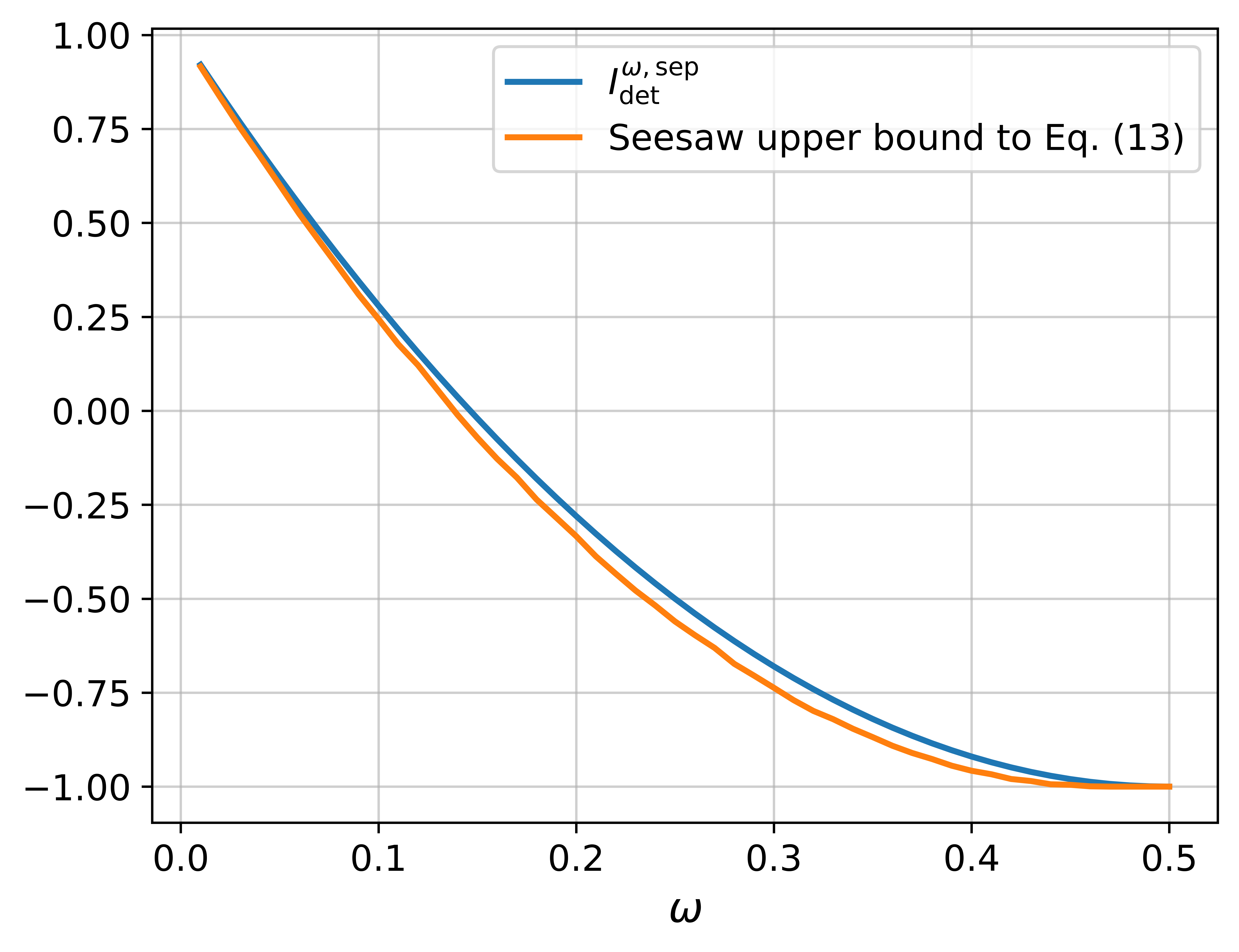}
        \caption{\textbf{Entangled violations of Eq.~\eqref{eq:deterministic-ineq-for-fixed-input-0}}. In {\color{blue}blue}, the lower bound $I^{\omega,\mathrm{sep}}_\mathrm{det}$ in Eq. \eqref{eq:deterministic-ineq-for-fixed-input-0}. In {\color{orange}orange}, seesaw upper bounds to Eq. \eqref{eq:minimization-of-E1}. We observe that entangled PM devices can generate behaviors deterministic for one of the inputs and, yet, violate the inequality in Eq. \eqref{eq:deterministic-ineq-for-fixed-input-0}.}
        \label{fig:deterministic-ineq}
    \end{figure}
    
    
    Numerically, we find that for all tested positive values of $\omega$, there exist behaviors $\mathbf{p} \in \Qent$ that are deterministic for input $x = 0$ but violate Eq.~\eqref{eq:deterministic-ineq-for-fixed-input-0}. We conjecture that this holds for all $\omega > 0$. In other words, for any positive $\omega$, the assumption of only classically correlated devices is necessary to conclude that a violation of Eq.~\eqref{eq:deterministic-ineq-for-fixed-input-0} certifies intrinsic randomness for input $x = 0$.

    \subsection{Channel discrimination}
    We now turn to the task of \emph{binary quantum channel discrimination}. Here, given two channels, $\Lambda_0$ and $\Lambda_1$, the goal is to determine which has acted on a quantum system. Ancillary systems and entangled inputs may be used to optimize the success probability \cite{Sacchi,Piani,Matthews}. The standard procedure is as follows: an input state $\sigma_P$ (possibly entangled with an ancilla) is prepared, one of the channels is applied, and a measurement is performed to infer which channel acted.  
    
    When the two channels are equally likely, the optimal success probability is
    \begin{equation}
        P_{\text{g}}^\diamond = \frac{1}{2} + \frac{1}{4} \| \Lambda_0 - \Lambda_1 \|_\diamond,
    \end{equation}
    where $\|\cdot\|_\diamond$ is the \emph{diamond norm}, defined by
    \begin{equation}
        \| \Lambda \|_\diamond := \sup_{\sigma_{PM}\in\mathcal{D}(\mathcal{H}_{PM})} \| (\Lambda \otimes \mathbb{1}_M)(\sigma_{PM}) \|_1,
    \end{equation}
    with the supremum over all density operators $\sigma_{PM}$ on system plus ancilla, and $\|\cdot\|_1$ the trace norm. On the other hand, when entanglement with an ancilla is not allowed/considered, the optimal success probability is given by
    \begin{equation}
        P_{\text{g}}^1 = \frac{1}{2} + \frac{1}{4} \| \Lambda_0 - \Lambda_1 \|_1,
    \end{equation}
    where $\|\cdot\|_1$ is the \emph{induced trace norm}, defined by
    \begin{equation}\label{induced_trace_norm}
        \begin{aligned}
            \|\Lambda\|_1 &= \sup_{\sigma_P\in\mathcal{D}(\mathcal{H}_P)}\|\Lambda(\sigma_P)\|_1 \\
            &= \max_{\sigma_P\in\mathcal{D}(\mathcal{H}_P),\,\|M\|_\infty\leq 1} \Tr[M(\Lambda(\sigma_P))],
        \end{aligned}
        \end{equation}
        where $\|\cdot\|_{\infty}$ denotes the infinity norm and the second line follows from Schatten norm duality. 
        
    Entanglement can enhance discrimination, as there are cases where no separable strategy achieves the success probability attainable with entangled inputs. For instance, Werner-Holevo channels can have arbitrarily small induced trace norm while maintaining a diamond norm of $1$  (see \cite[Example 3.36]{watrous2018theory}). However, this discrepancy does not occur when distinguishing a channel from the identity. In this case, the following theorem holds ~\cite{watrous2018theory} (see also \cite{kretschmann2004tema}):  
    
   \begin{theorem}{\cite[Thm. 3.56]{watrous2018theory}}\label{Th:id_discrim}
    Let $\Lambda$ be a quantum channel such that $\|\Lambda - \mathbb{1}\|_1 \in [1,2]$. Then
    \begin{equation}
        \frac{\|\Lambda - \mathbb{1}\|_\diamond}{\|\Lambda - \mathbb{1}\|_1} \le \sqrt{2}.
    \end{equation}
    \end{theorem}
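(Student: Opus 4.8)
The plan is to establish the theorem as an immediate corollary of the single clean inequality
\begin{equation}\label{eq:key-lemma-plan}
\|\Lambda-\mathbb{1}\|_\diamond^2 \le 2\,\|\Lambda-\mathbb{1}\|_1,
\end{equation}
which I would prove for every channel $\Lambda$, with no hypothesis on the norm. Granting \eqref{eq:key-lemma-plan}, the statement follows by dividing through by $\|\Lambda-\mathbb{1}\|_1^2$ and using $\|\Lambda-\mathbb{1}\|_1\ge 1$:
\begin{equation*}
\frac{\|\Lambda-\mathbb{1}\|_\diamond^2}{\|\Lambda-\mathbb{1}\|_1^2}\le\frac{2}{\|\Lambda-\mathbb{1}\|_1}\le 2 .
\end{equation*}
This is consistent with the endpoints of the interval $[1,2]$: at $\|\Lambda-\mathbb{1}\|_1=2$ it only reproduces the trivial bound $\|\Lambda-\mathbb{1}\|_\diamond\le 2$, so the nontrivial regime is $\|\Lambda-\mathbb{1}\|_1\in[1,\sqrt 2)$, where the $\le 2$ diamond bound alone would not suffice.

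To prove \eqref{eq:key-lemma-plan} I would first reduce both norms to optimizations over pure inputs: for the induced norm this is convexity of the trace norm, and for the diamond norm it is the standard fact that the optimal input can be taken pure on $\mathcal{H}_S\otimes\mathcal{H}_M$ with $\dim\mathcal{H}_M=\dim\mathcal{H}_S$. The key structural observation is that for a pure input $\ket{u}$ the operator $X:=(\Lambda\otimes\mathbb{1})(\ket{u}\!\bra{u})-\ket{u}\!\bra{u}$ is the difference of a density operator and a rank-one projector, hence has a single negative eigenvalue; since $X$ is traceless this yields the exact formula $\tfrac12\|X\|_1=\max_{\ket{w}}\big(|\langle w|u\rangle|^2-\bra{w}(\Lambda\otimes\mathbb{1})(\ket{u}\!\bra{u})\ket{w}\big)$ and, comparing the choice $\ket w=\ket u$ against the Fuchs--van de Graaf bound, the two-sided estimate $2(1-F)\le\|X\|_1\le 2\sqrt{1-F}$ with $F=\bra{u}(\Lambda\otimes\mathbb{1})(\ket{u}\!\bra{u})\ket{u}$. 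Applying the upper estimate at the diamond-optimal $\ket u$ gives $\|\Lambda-\mathbb{1}\|_\diamond\le 2\sqrt{1-F_e}$, where $F_e$ is the entanglement fidelity of the reduced input state $\rho=\Tr_M\ket{u}\!\bra{u}$ under $\Lambda$.

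The decisive step, and the one I expect to be the main obstacle, is to bound the induced norm from below by the same fidelity deficit, namely $\|\Lambda-\mathbb{1}\|_1\ge 2(1-F_e)$; combined with the previous estimate this gives $\|\Lambda-\mathbb{1}\|_\diamond^2\le 4(1-F_e)\le 2\|\Lambda-\mathbb{1}\|_1$ and hence \eqref{eq:key-lemma-plan}. This is precisely where the special role of the identity channel must enter: for generic pairs of channels the analogous inequality is false (Werner--Holevo channels have vanishing induced norm yet diamond norm one), so the argument cannot be purely formal. Writing $\Lambda(\cdot)=\sum_a A_a(\cdot)A_a^\dagger$ and $\rho=\sum_i p_i\ket{i}\!\bra{i}$, one has $F_e=\sum_a|\Tr(A_a\rho)|^2$, while $\|\Lambda-\mathbb{1}\|_1=2\max_{\ket x,\ket y}\big(|\langle y|x\rangle|^2-\sum_a|\bra{y}A_a\ket{x}|^2\big)$; the difficulty is that $F_e$ is an \emph{entangled} (off-diagonal) quantity that can be strictly smaller than the worst-case single-system fidelity $\min_{\ket x}\sum_a|\bra{x}A_a\ket{x}|^2$, so the witness $\ket y=\ket x$ does not suffice and one must construct a genuinely two-vector single-system discriminator from the diamond optimizer. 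I would attempt this by extracting aligned rank-one components from the Schmidt/singular data of the optimal $\ket u$, or by an averaging argument over such components, exploiting the completeness relation $\sum_a A_a^\dagger A_a=\mathbb{1}$; closing this gap cleanly is the crux of the proof.
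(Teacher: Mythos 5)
You should first be aware that the paper offers no proof of this statement: it is imported verbatim as \cite[Thm.~3.56]{watrous2018theory}, so the only thing to assess is whether your outline could be completed. It cannot. The step you yourself identify as the crux, $\|\Lambda-\mathbb{1}\|_1\ge 2(1-F_e)$ with $F_e$ the entanglement fidelity of the reduced state of the diamond-optimal input, is false, and so is the key lemma $\|\Lambda-\mathbb{1}\|_\diamond^2\le 2\|\Lambda-\mathbb{1}\|_1$ to which you reduce the theorem. Take the completely depolarizing qubit channel $\Lambda(\rho)=\Tr[\rho]\,\mathbb{1}/2$. Then $\|\Lambda-\mathbb{1}\|_1=\sup_{\rho}\|\mathbb{1}/2-\rho\|_1=1$, attained on pure states, whereas the maximally entangled input $\ket{\phi^+}$ gives $(\Lambda\otimes\mathbb{1})(\ket{\phi^+}\!\bra{\phi^+})=\tfrac14\,\mathbb{1}\otimes\mathbb{1}$, so that $\|\Lambda-\mathbb{1}\|_\diamond\ge\big\|\tfrac14\mathbb{1}\otimes\mathbb{1}-\ket{\phi^+}\!\bra{\phi^+}\big\|_1=\tfrac34+3\cdot\tfrac14=\tfrac32$ and $F_e=1/4$. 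Hence $2(1-F_e)=3/2>1=\|\Lambda-\mathbb{1}\|_1$ and $\|\Lambda-\mathbb{1}\|_\diamond^2\ge 9/4>2=2\|\Lambda-\mathbb{1}\|_1$. Your own diagnosis of why the step is delicate is exactly the mechanism that kills it: here $F_e=1/4$ is strictly below the worst-case single-system fidelity $\min_\psi\bra{\psi}\Lambda(\psi\psi^*)\ket{\psi}=1/2$, and $\|\Lambda-\mathbb{1}\|_1$ only controls the latter, so no choice of single-system discriminator can certify the deficit $1-F_e$ with the constant $2$ you need.

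The earlier parts of your outline are sound (reduction to pure inputs, the single-negative-eigenvalue formula for the trace norm, Fuchs--van de Graaf giving $\|\Lambda-\mathbb{1}\|_\diamond\le 2\sqrt{1-F_e}$), and they do lead to a correct theorem once the decisive step is replaced by a relation of the form $1-F_e\le\|\Lambda-\mathbb{1}\|_1$, i.e.\ without your factor of $2$; this yields $\|\Lambda-\mathbb{1}\|_\diamond\le 2\sqrt{\|\Lambda-\mathbb{1}\|_1}$, which is what the argument in \cite{watrous2018theory} actually delivers and which the depolarizing example saturates up to constants. But note that this weaker inequality only bounds the ratio by $2/\sqrt{\|\Lambda-\mathbb{1}\|_1}$, which equals $\sqrt2$ at the endpoint $\|\Lambda-\mathbb{1}\|_1=2$ and equals $2$ at $\|\Lambda-\mathbb{1}\|_1=1$. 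Since the depolarizing example sits at $\|\Lambda-\mathbb{1}\|_1=1$ with ratio $3/2>\sqrt2$, the constant $\sqrt2$ in the statement as reproduced here is itself contradicted on the interior of the interval $[1,2]$; the statement (and the downstream bound $P_{\mathrm{adv}}\le 1/2+1/\sqrt2$ derived from it) should be rechecked against the original formulation in \cite{watrous2018theory} before you invest further effort in proving it as written.
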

    
    This theorem limits the advantage entanglement can provide in distinguishing a channel $\Lambda$ from the identity. Specifically, if the channel is ``sufficiently far'' from the identity, the ratio between the optimal success probability with entanglement and without is bounded by
    \begin{equation}\label{entangl_adv}
    \begin{aligned}
        P_{\text{adv}}(\Lambda) := \frac{P_{\text{g}}^{\diamond}}{P_g^{1}} &= \frac{\frac{1}{2}(1 + \frac{1}{2} | \Lambda - \mathbb{1} |_\diamond)}{\frac{1}{2}(1 + \frac{1}{2} | \Lambda - \mathbb{1} |_1)}\\
        &\leq \left(\frac{1}{2} + \frac{1}{\sqrt{2}}\right) \approx 1.208.
    \end{aligned}
    \end{equation}
    
    Operationally, this task corresponds to detecting whether a particular transformation $\Lambda$ has been applied or if the system remains unchanged. This instance of channel discrimination is of particular interest, as it directly relates to the fundamental capacity of identifying the action of a quantum process. 
    
    \subsubsection{Energy-Constrained Channel Discrimination}
In this section we impose an \emph{energy constraint} on channel discrimination tasks: for any admissible entangled probe through the channels, the resulting output must have energy at most \(\omega\), evaluated with respect to a fixed Hamiltonian \(H\) that satisfies the conditions in Section~\ref{sec:energy-constrained-scenario}. This restriction can be motivated by power consumption requirements or by input power range of the detector in the measurement device.

Consider the pair \(\{\mathbb{1},\,\Lambda\}\). Because one element is the identity, the output-energy bound immediately enforces the same bound on the input state. To quantify any gain from preshared entanglement in this setting, we use
\(P_\mathrm{adv}^\mathrm{ec}(\Lambda)\), namely the advantage defined in Eq.~\eqref{entangl_adv} evaluated under the energy-constrained (ec) conditions.
        We will show that there exists a family $\{\Lambda_\omega\}_\omega$ of quantum channels such that:
        \begin{enumerate}
            \item For all $\omega\in(0,1/2]$, $P_\mathrm{adv}^\mathrm{ec}(\Lambda_\omega)>P_\mathrm{adv}(\Lambda_\omega)$;  and
            \item For $\omega\gtrsim 0.39$, $P_\mathrm{adv}^\mathrm{ec}(\Lambda_\omega)$ violates the upper bound in Eq. \eqref{entangl_adv}. 
        \end{enumerate}
        
        As mentioned in Section \ref{sec:separation}, in Appendix ~\ref{App:analytical state} we provide an analytical family of states $\{\ket{\psi^{0,\omega}}_{SM},\rho_{SM}^{1,\omega}\}_\omega$ violating Eq. \eqref{maxIclasscorr} (for an appropriate measurement) for every $\omega$. We take $\ket{\psi^{0,\omega}}_{SM}$ as the initial shared state and we construct the channel $\Lambda_\omega$ using state-channel duality with the state $\rho_{SM}^{1,\omega}$ (see Appendix~\ref{App:analytical state}).

         Our aim is to quantify the entanglement-assisted advantage in this energy-constrained scenario. To this end, we compute the ratio between the diamond norm and the induced trace norm of the difference between the two channels, both in the unconstrained and energy-constrained cases, thereby evaluating the quantity in Eq.~\eqref{entangl_adv}.

        \medskip
        
        \textbf{Upper bound on the entanglement advantage without energy constraint.}
        To compute the diamond distance between two channels, we express the problem as a semidefinite program (SDP), which can be efficiently solved for low-dimensional systems. Using the Choi operator of a channel, this can be written as~\cite{skrzypczyk2023semidefinite}  
        \begin{equation}
            \|\Lambda\|_\diamond = \max_{Y,\sigma} \Tr[Y X_\Lambda],
        \end{equation}
        subject to
        \begin{equation}
        \begin{aligned}
            -d(\mathbb{1} \otimes \sigma) \leq Y \leq d(\mathbb{1} \otimes \sigma), \\
            \sigma \geq 0,\quad \Tr[\sigma] = 1,
        \end{aligned}
        \end{equation}
        where \( X_\Lambda \) is the Choi matrix of \( \Lambda \) and \( d \) is the Hilbert-space dimension on which the channel acts.
        
        To evaluate the quantity in Eq.~\eqref{entangl_adv}, we also need the induced trace distance between the channels $\{\Lambda_\omega\}_\omega$ and the identity channel \( \mathbb{1} \). Unlike the diamond norm, the induced trace norm does not admit an SDP formulation, as it involves a non-convex optimization. Nevertheless, lower bounds can be obtained via a seesaw optimization based on Eq. \eqref{induced_trace_norm}. 
        The optimization alternates between the state \( \rho \) and the operator \( M \), providing a lower bound on the induced trace norm.
        
        Finally, this provides an upper bound to Eq.~\eqref{entangl_adv}:
        \begin{equation}\label{eq:padv-upper-bound-not-ec}
            P_{\text{adv}}(\Lambda_\omega) \leq \frac{1 + \frac{1}{2}\|\Lambda_\omega - \mathbb{1}\|_\diamond}{1 + \frac{1}{2}\text{\emph{lower}}\|\Lambda_\omega - \mathbb{1}\|_1},
        \end{equation}
        thus bounding the potential advantage of entanglement in the absence of energy constraints.
        
        \medskip

        \textbf{Lower bound on the entanglement advantage in the energy-constrained scenario.}
        To compare with the energy-constrained setting, we aim to bound the same quantities: specifically, a \emph{lower bound} on the diamond norm and an \emph{upper bound} on the induced trace norm under the energy constraint. Together, these provide a lower bound on the right-hand side of Eq.~\eqref{entangl_adv}.  

        The states $\ket{\psi^{0,\omega}}_{SM}$ and $(\Lambda_\omega\otimes\mathbb{1})(\ket{\psi^{0,\omega}}_{SM})$ satisfy the energy constraint by construction. Therefore, their trace distance is a lower bound to the energy-constrained diamond norm between $\mathbb{1}$ and $\Lambda_\omega$.
        
        The upper bound on the energy-constrained induced trace norm is obtained via the \emph{Lasserre hierarchy}, a method for polynomial optimization that provides a sequence of SDP relaxations converging to the global optimum (see Appendix~\ref{App:Lasserre hierarchy}).
        
        Finally, we estimate a lower bound on the ratio between the guessing probabilities with and without entanglement in the energy-constrained case:
        \begin{equation}\label{eq:padv-lower-bound-ec}
            P_{\text{adv}}^\mathrm{ec}(\Lambda_\omega)
            \geq
            \frac{1+\frac{1}{2}\text{\emph{lower}}\|\Lambda_\omega - \mathbb{1}\|_\diamond^\mathrm{ec}}{1+\frac{1}{2}\text{\emph{upper}}\|\Lambda_\omega - \mathbb{1}\|_1^\mathrm{ec}}.
        \end{equation}
        
        In Figure~\ref{fig:guess_prob_adv} we plot \eqref{eq:padv-upper-bound-not-ec} and \eqref{eq:padv-lower-bound-ec} as a function of $\omega$. As can been seen, the entanglement advantage is always greater in the energy-constrained setting, even violating the universal bound in Eq. \eqref{entangl_adv} for a sufficiently large $\omega$. 
        
        \begin{figure}[t]
            \centering
            \includegraphics[width=1\linewidth]{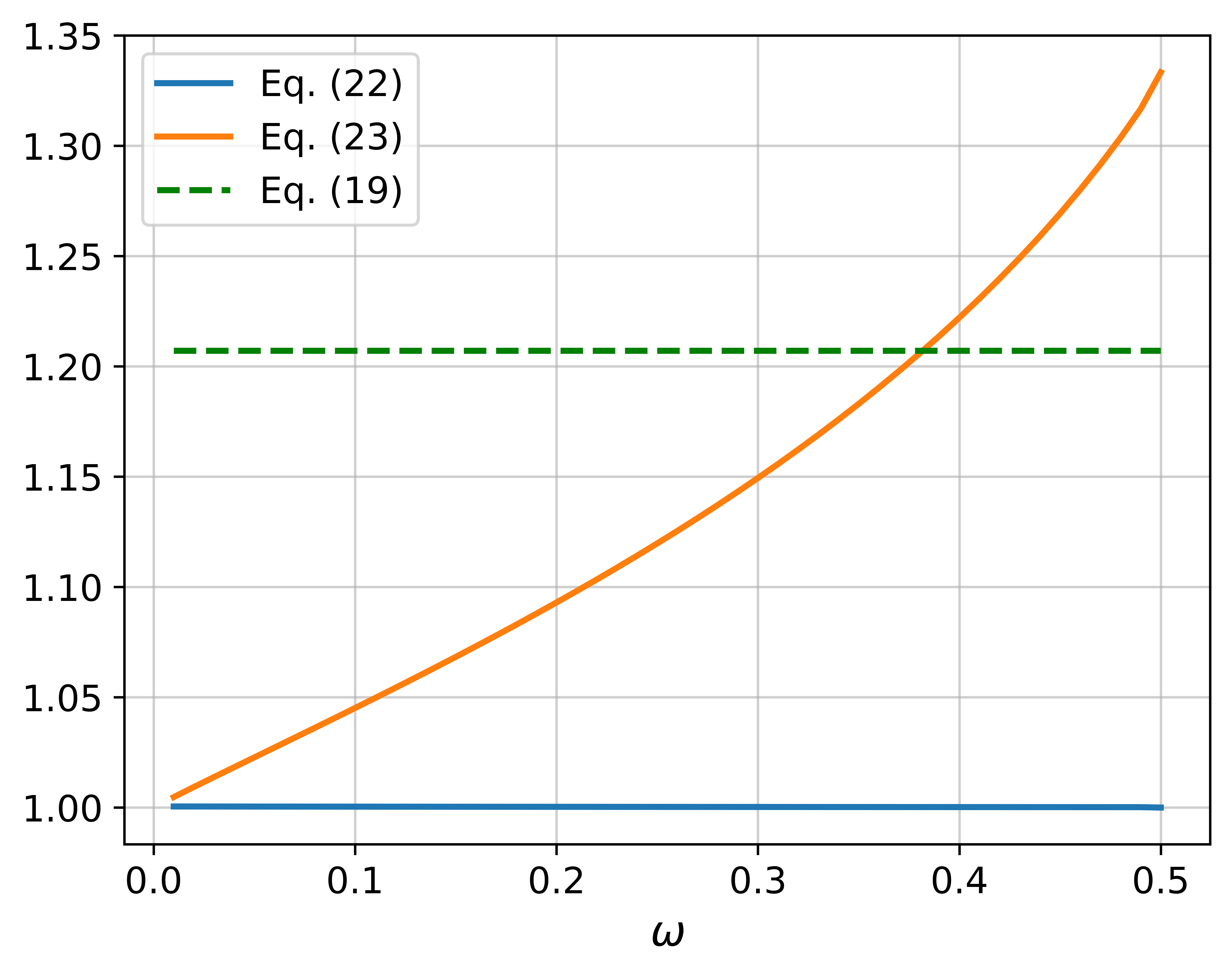}
            \caption{\textbf{Entanglement advantage under energy constraints}. In {\color{blue}blue}, the upper bound to $P_{\text{adv}}(\Lambda_\omega)$ from Eq.~\eqref{eq:padv-upper-bound-not-ec}. In {\color{orange}orange}, the lower bound to $P_{\text{adv}}^{\text{ec}}(\Lambda_\omega)$ from Eq.~\eqref{eq:padv-lower-bound-ec}. In {\color{ForestGreen}green}, the upper bound from  Eq.~\eqref{entangl_adv} to $P_{\text{adv}}(\Lambda)$ for an arbitrary channel $\Lambda$.}
            \label{fig:guess_prob_adv}
        \end{figure}
     
\section{Conclusion}
    We have investigated the role of shared entanglement between preparation and measurement devices in the energy-constrained SDI framework. Our results demonstrate that allowing such correlations enlarges the set of achievable statistics beyond what is possible without entanglement. This has concrete consequences for quantum information tasks. First, in randomness certification, we showed that even when observed statistics admit a non-entangled realization, the presence of shared entanglement can substantially reduce the amount of certifiable randomness, reaching zero for some range of values of the energy bound. Second, in quantum channel discrimination, we constructed an example where one of the two channels is the identity, for which the entanglement-assisted success probability approaches a $4/3$ improvement over the non-assisted case. This advantage not only surpasses the corresponding unconstrained scenario for the same pair of channels, but also exceeds what is achievable in the unconstrained setting for discrimination of the identity against any channel. 
    
    These results underline the importance of explicitly accounting for the role of entanglement in semi-device-independent analyses. The assumption of no shared entanglement has a decisive impact on the performance and security guarantees of quantum protocols. Future work may focus on characterizing the set of correlators achievable in this more general setting and on deriving lower bounds for the amount of certifiable randomness.

\section*{ACKNOWLEDGEMENTS}
This work was supported by the Government of Spain (Severo Ochoa CEX2019-000910-S, FUNQIP and European Union NextGenerationEU PRTR-C17.I1), Fundació Cellex, Fundació Mir-Puig, Generalitat de Catalunya (CERCA program), and the EU Quantera project Veriqtas. 
RD acknowledges funding from the European
Union’s Horizon Europe research and innovation programme under the Marie Skłodowska-Curie grant agreement No. 101081441. GS acknowledges funding from the Government of Spain (Torres Quevedo PTQ2021-011870). M.A. acknowledges the funding supported by the European Union  (Project- QURES, Grant Agreement No. 101153001).

\section*{Note added}
While finishing this manuscript, we became aware of a similar work by Roch i Carceller et al., which reports related results.

\section*{Data and code availability}
The Python code used to perform the numerical optimizations and generate the figures presented in this work is available at \url{https://github.com/raffaeledavino/EntInEnerConstrPM}.

\bibliographystyle{quantum}
\bibliography{References}

\clearpage
\onecolumngrid
\appendix

\begin{center}
\large{\textbf{\textsc{Supplementary Material}}}

\end{center}

\section{Nonunitary channels are neccessary to observe $\Qsep\subsetneq\Qent$}\label{App:unitary channels}
We prove here a result stronger than Thm. \ref{th:unitary channels} from the main text, which will then follow as a corollary.

\begin{theorem}\label{thm:mixed-states}
Let $\ket{\psi_1}_{SM},\ket{\psi_2}_{SM}\in \mathcal{H}_S\otimes \mathcal{H}_M$ be such that 
\begin{equation}
\ket{\psi_2}_{SM} = (U_{S}\otimes \mathbb{1}_M)\ket{\psi_1}_{SM},
\end{equation}
for some unitary $U_{S}$. Let $\ket{g}_{S}\in \mathcal{H}_S$ satisfy
\begin{equation}
\bra{\psi_x}(\ket{g}_{S}\bra{g}\otimes \mathbb{1}_M)\ket{\psi_x} \ge 1 - \omega, 
\qquad x = 1,2.
\end{equation}
Then, for every measurement $\{\Pi^b_{SM}\}_b$,
\begin{equation}
    \left\{ \bra{\psi_x}{\Pi^b_{SM}}\ket{\psi_x} \right\}_{b,x}
    \in \Qsep.
\end{equation}

\end{theorem}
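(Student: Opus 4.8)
The plan is to produce, for every behavior $\{\bra{\psi_x}\Pi^b_{SM}\ket{\psi_x}\}_{b,x}$, an explicit \emph{unentangled} energy-constrained prepare-and-measure model, which is a fortiori separable and therefore places the behavior in $\Qsep$. Since the measurement is binary, the behavior is fixed by the correlators $E_x=\bra{\psi_x}O\ket{\psi_x}$ with $O=\Pi^0_{SM}-\Pi^1_{SM}$, $\|O\|_\infty\le 1$. I would first note that $E_1,E_2$ see $O$ only through its compression to $V:=\mathrm{span}\{\ket{\psi_1},\ket{\psi_2}\}$, namely $E_x=\bra{\psi_x}O_V\ket{\psi_x}$ with $O_V=\Pi_V O\Pi_V$ and $\|O_V\|_\infty\le 1$ ($\Pi_V$ the projector onto $V$, $\dim V\le 2$). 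Thus the whole behavior is reproduced by a single qubit (the space $V$) carrying the two pure states $\ket{\psi_1},\ket{\psi_2}$ and the effect $O_V$, with no ancilla.

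Next I would supply this qubit model with a ground state. Writing $t:=|\langle\psi_1|\psi_2\rangle|$ and taking $\ket{g'}\in V$ to be the normalized bisector of $\ket{\psi_1}$ and $\ket{\psi_2}$, a one-line computation gives $|\langle g'|\psi_x\rangle|^2=(1+t)/2$ for $x=1,2$. Hence the unentangled qubit strategy with preparations $\ket{\psi_x}$, ground state $\ket{g'}$ and POVM $\{(\mathbb{1}\pm O_V)/2\}$ satisfies the energy bound $|\langle g'|\psi_x\rangle|^2\ge 1-\omega$ exactly when $t\ge 1-2\omega$, in which case the behavior belongs to $\Qsep$. The whole statement therefore reduces to the overlap bound $t\ge 1-2\omega$.

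To prove $t\ge 1-2\omega$ I would use that both $U\otimes\mathbb{1}_M$ and the energy projector $\ket{g}\bra{g}\otimes\mathbb{1}_M$ act trivially on the ancilla. Setting $\rho=\Tr_M\ket{\psi_1}\bra{\psi_1}$ this gives $t=|\langle\psi_1|(U\otimes\mathbb{1}_M)|\psi_1\rangle|=|\Tr[U\rho]|$, while the two hypotheses become the single-system conditions $\bra{g}\rho\ket{g}\ge 1-\omega$ and $\bra{g}U\rho U^\dagger\ket{g}\ge 1-\omega$, equivalently $\bra{h}\rho\ket{h}\ge 1-\omega$ with $\ket{h}=U^\dagger\ket{g}$. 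The target is thus the purely single-system statement $|\Tr[U\rho]|\ge 1-2\omega$. For \emph{pure} $\rho=\ket{\phi}\bra{\phi}$ this is immediate: inserting $\mathbb{1}=\ket{g}\bra{g}+\Pi^\perp_g$ into $\langle\phi|U\phi\rangle$ and using Cauchy--Schwarz on the excited block pairs two factors each $\le\sqrt{\omega}$, so that $|\langle\phi|U\phi\rangle|\ge|\langle\phi|g\rangle|\,|\langle g|U\phi\rangle|-\omega\ge(1-\omega)-\omega$.

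The hard part is the same bound for \emph{mixed} $\rho$, i.e. genuinely entangled $\ket{\psi_1}$: now the analogous splitting leaks cross terms of order $\sqrt{\omega}$, and the fidelity-triangle inequality only controls $|\bra{g}U\ket{g}|=|\langle g|h\rangle|\ge 1-2\omega$, not $|\Tr[U\rho]|$. Since $|\Tr[U\rho]|$ is a convex function of $\rho$, its minimum over the feasible set need not sit at an extreme point, so a naive rank reduction is not enough; the correct route is a dual (operator-inequality) certificate of the form $\mathrm{Re}(e^{i\kappa}U)\succeq a\ket{g}\bra{g}+a\ket{h}\bra{h}-c\,\mathbb{1}$ with $2a(1-\omega)-c\ge 1-2\omega$, supported on the essentially one-dimensional space $\mathrm{span}\{\ket{g},\ket{h}\}$ (recall $|\langle g|h\rangle|\ge 1-2\omega$). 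I expect such a certificate to exist and to encode the exact cancellation that the crude $\sqrt{\omega}$ estimates miss; a guiding sanity check is the qubit-support case, where writing $\rho$ through its Bloch vector $\vec r$ and using the identity $\hat{h}\cdot\vec r=2|\langle g|h\rangle|(\hat n\cdot\vec r)-r_z$ together with the two constraints yields $|\Tr[U\rho]|\ge 1-2\omega$ directly, with equality for the diagonal $U$ of the saturating examples. Theorem~\ref{th:unitary channels} then follows as the special case, applying the above to $\ket{\psi_1}=(U_0\otimes\mathbb{1}_M)\ket{\psi}$ and $\ket{\psi_2}=(U_1\otimes\mathbb{1}_M)\ket{\psi}$ with $U=U_1U_0^{\dagger}$.
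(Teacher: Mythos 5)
Your reduction matches the paper's: build a common ground state in $\mathrm{span}\{\ket{\psi_1},\ket{\psi_2}\}$ (your normalized bisector, with overlap $(1+t)/2$, is the paper's $\ket{\tilde g}$ up to an optional dilution by a perpendicular component that tunes the overlap to exactly $1-\omega$), observe that this works iff $t=|\braket{\psi_1}{\psi_2}|\ge 1-2\omega$, and trace out $M$ to turn that into the single-system claim $|\Tr[U\rho]|\ge 1-2\omega$ under $\bra{g}\rho\ket{g}\ge 1-\omega$ and $\bra{h}\rho\ket{h}\ge 1-\omega$, $\ket{h}=U^\dagger\ket{g}$. (A side remark: the compression $O_V$ and the qubit model are unnecessary; once $\ket{\tilde g}$ exists one can simply declare the bipartite system $SM$ to be the single prepared system with ground state $\ket{\tilde g}$ and keep the original POVM $\{\Pi^b_{SM}\}_b$, which is how the paper phrases it.)

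The genuine gap is that you do not prove the single-system inequality for mixed $\rho$, and that inequality is the entire technical content of the theorem: ``I expect such a certificate to exist'' is a conjecture, and your Bloch-vector check only covers $\rho$ supported on $\mathrm{span}\{\ket g,\ket h\}$, while your Cauchy--Schwarz argument only covers pure $\rho$, i.e.\ unentangled $\ket{\psi_1}$ --- precisely the case the theorem is not about. The paper closes this step in its Lemma~\ref{lemma:overlap}: compress $\rho$ to $\sigma=\Pi\rho\Pi$ with $\Pi$ the projector onto $\mathrm{span}\{\ket g,\ket h\}$ and $t=\Tr\sigma$; replace $U$ by the contraction $T=\Pi U\Pi$, which sends $\ket h\mapsto\ket g$ and hence (because the defect operator $\mathbb{1}-T^\dagger T\succeq 0$ annihilates $\ket h$) sends $\ket{h_\perp}\mapsto\beta\ket{g_\perp}$ with $|\beta|\le 1$; set $a=\bra{g}\sigma\ket{h}$, $b=\bra{g_\perp}\sigma\ket{h_\perp}$, derive $|a|^2-|b|^2\ge t(2p-t)$ from the two diagonal constraints and $|a|+|b|\le t$ from Cauchy--Schwarz in the $\sigma$-inner product, so that $|\Tr(\sigma T)|\ge|a|-|b|\ge 2p-t$; finally the off-subspace tail costs at most $1-t$ by H\"older, giving $|\Tr(U\rho)|\ge 2p-1$. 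That chain of estimates is, in effect, the dual certificate you were hoping for, but it has to be constructed; without it your argument establishes the theorem only in the special cases above.
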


\begin{proof}
We will show that there exists a state $\ket{\tilde{g}}\in \mathcal{H}_S\otimes \mathcal{H}_M$ 
having an overlap $\ge 1-\omega$ with both $\ket{\psi_1}$ and $\ket{\psi_2}$. Hence, any behavior resulting from a measurement $\{\Pi^b_{SM}\}_b$ on these states can be implemented in a classically correlated setup, in which the prepared \emph{single} system is now $SM$ and the groundstate is $\ket{\tilde{g}}$.

\medskip

Let $\braket{\psi_1}{\psi_2} = r e^{i\theta}$ and define
\begin{equation}
\ket{\phi} := \ket{\psi_1} + e^{-i\theta}\ket{\psi_2},
\end{equation}
which is unnormalized. Then,
\begin{equation}
\bigl|\braket{\phi}{\psi_x}\bigr| = 1 + r,
\qquad 
\|\phi\|^2 = 2(1 + r).
\end{equation}
Let $\ket{\psi_\perp}$ be orthogonal to both $\ket{\psi_1}$ and $\ket{\psi_2}$, 
and define
\begin{equation}
\ket{\tilde{g}}_{SM}
:= \frac{\ket{\phi} + \xi \ket{\psi_\perp}}
{\|\ket{\phi} + \xi \ket{\psi_\perp}\|},
\qquad
\xi := \sqrt{\frac{(1+r)^2}{1-\omega} - 2(1+r)}.
\end{equation}
In Lemma~\ref{lemma:overlap}, we will show that $\xi \in [0,\infty)$. 
With this choice, we have
\begin{equation}
|\braket{\tilde{g}}{\psi_x}|^2 = 1 - \omega.
\end{equation}
Therefore, the bipartite system $SM$ can be viewed as a new single system $S'$, 
whose ground state is $\ket{\tilde{g}}$ and such that 
$\ket{\psi_x}_{S'}$ has overlap $\ge 1 - \omega$ with it. 
Thus, irrespective of the measurement $\{\Pi_{S'}^b\}_b$, 
the resulting behavior always belongs to 
$\Qsep$.
\end{proof}

\begin{lemma}\label{lemma:overlap}
Let $\ket{\psi_1}_{SM},\ket{\psi_2}_{SM}\in \mathcal{H}_S\otimes \mathcal{H}_M$ satisfy 
\begin{equation}
\ket{\psi_2}_{SM} = (U_{S}\otimes \mathbb{1}_M)\ket{\psi_1}_{SM},
\end{equation}
for a unitary $U_{S}$. Suppose there exists $\ket{g}_{S}\in \mathcal{H}_S$ such that 
\begin{equation}
p \le p_x := \bra{\psi_x}(\ket{g}_{S}\bra{g}\otimes \mathbb{1}_M)\ket{\psi_x},
\qquad x = 1,2.
\end{equation}
Then,
\begin{equation}
|\braket{\psi_1}{\psi_2}| \ge 2p - 1.
\end{equation}
\end{lemma}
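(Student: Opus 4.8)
The plan is to reduce the overlap to a trace functional of the reduced state on $S$, and then to establish a sharper, state-wise inequality through a phase-optimized operator inequality.

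First I would pass to the reduced operator. Taking a Schmidt decomposition $\ket{\psi_1}=\sum_i\sqrt{\lambda_i}\,\ket{s_i}_S\ket{\mu_i}_M$ and using $\ket{\psi_2}=(U\otimes\mathbb{1}_M)\ket{\psi_1}$, the orthonormality of the $\ket{\mu_i}$ kills all cross terms, so that
\[
\braket{\psi_1}{\psi_2}=\sum_i\lambda_i\bra{s_i}U\ket{s_i}=\Tr[U\rho],\qquad \rho:=\Tr_M[\ket{\psi_1}\bra{\psi_1}].
\]
The two hypotheses turn into $\bra{g}\rho\ket{g}=p_1\ge p$ and, since $\bra{\psi_2}(\ket{g}\bra{g}\otimes\mathbb{1}_M)\ket{\psi_2}=\bra{g}U\rho U^\dagger\ket{g}$, also $\bra{g}U\rho U^\dagger\ket{g}=p_2\ge p$. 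Thus it suffices to prove $\lvert\Tr[U\rho]\rvert\ge 2p-1$ for any state $\rho$ obeying these two overlap bounds.

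Next I would prove the stronger, $p$-independent bound
\[
\lvert\Tr[U\rho]\rvert\ \ge\ \bra{g}\rho\ket{g}+\bra{v}\rho\ket{v}-1,\qquad \ket{v}:=U^\dagger\ket{g},
\]
which holds for every state and yields $2p-1$ upon inserting the two overlap bounds (note $\bra{v}\rho\ket{v}=\bra{g}U\rho U^\dagger\ket{g}$). Since $\lvert\Tr[U\rho]\rvert=\max_\gamma\Tr[\mathrm{Re}(e^{i\gamma}U)\rho]$, it is enough to find one phase $\gamma$, depending only on $U$ and $\ket{g}$, for which the operator inequality
\[
\mathbb{1}+\mathrm{Re}(e^{i\gamma}U)\ \ge\ \ket{g}\bra{g}+\ket{v}\bra{v}
\]
holds; taking $\Tr[\,\cdot\,\rho]$ of this inequality then gives the bound. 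Here I would use the factorization $\mathbb{1}+\mathrm{Re}(e^{i\gamma}U)=\tfrac12(\mathbb{1}+e^{i\gamma}U)(\mathbb{1}+e^{i\gamma}U)^\dagger$.

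The hard part is establishing this operator inequality, and in particular that the phase degree of freedom can always be used to make it hold: at $\gamma=0$ it can fail (for instance for $U$ a reflection with $\bra{g}U\ket{g}=\tfrac12$, where one must instead take $\gamma=\pi/3$). Because the right-hand side is supported on the at-most-two-dimensional subspace $\mathrm{span}\{\ket{g},\ket{v}\}$, I would conjugate by $W_\gamma^{-1}$, with $W_\gamma=\mathbb{1}+e^{i\gamma}U$, reducing the inequality to the two-vector condition $\ket{g'}\bra{g'}+\ket{v'}\bra{v'}\le\tfrac12\mathbb{1}$, where $\ket{g'}=W_\gamma^{-1}\ket{g}$ and $\ket{v'}=W_\gamma^{-1}\ket{v}$; this is the requirement that the top eigenvalue of the corresponding $2\times2$ Gram matrix not exceed $\tfrac12$. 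I expect the correct $\gamma$ to be dictated by the top eigenvector of $\ket{g}\bra{g}+\ket{v}\bra{v}$ (equivalently by $\arg\bra{g}U\ket{g}$), and that the resulting scalar inequality is saturated exactly at the extremal configurations---the pure state bisecting $\ket{g}$ and $\ket{v}$, and the mixture placing its surplus weight on a $(-1)$-eigenvector of $U$---each of which meets the bound $2p-1$.
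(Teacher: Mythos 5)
Your opening reduction is correct and matches the paper's starting point: $\braket{\psi_1}{\psi_2}=\Tr[U\rho]$ with $\rho=\Tr_M[\ket{\psi_1}\bra{\psi_1}]$, and the hypotheses become $\bra{g}\rho\ket{g}\ge p$ and $\bra{v}\rho\ket{v}\ge p$ with $\ket{v}=U^\dagger\ket{g}$. The problem is the step you yourself flag as the hard part: the claimed operator inequality $\mathbb{1}+\mathrm{Re}(e^{i\gamma}U)\ \ge\ \ket{g}\bra{g}+\ket{v}\bra{v}$ for some phase $\gamma$ depending only on $U$ and $\ket{g}$ is \emph{false} in general. Take $\mathcal{H}=\mathbb{C}^3$, $U=\mathrm{diag}(1,\omega,\omega^2)$ with $\omega=e^{2\pi i/3}$, and $\ket{g}=\tfrac{1}{\sqrt3}(1,1,1)^T$. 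Then $\braket{g}{v}=0$, $\bra{g}U\ket{g}=\bra{v}U\ket{v}=0$, $\bra{g}U\ket{v}=1$ and $\bra{g}U^\dagger\ket{v}=0$, so the compression of $\mathbb{1}+\mathrm{Re}(e^{i\gamma}U)-\ket{g}\bra{g}-\ket{v}\bra{v}$ to the orthonormal pair $\{\ket{g},\ket{v}\}$ is $\tfrac12\bigl(\begin{smallmatrix}0 & e^{i\gamma}\\ e^{-i\gamma} & 0\end{smallmatrix}\bigr)$, which has eigenvalue $-\tfrac12$ for \emph{every} $\gamma$. Since positivity survives compression, no choice of $\gamma$ works. (In this example the lemma itself is vacuous because $\bra{g}\rho\ket{g}+\bra{v}\rho\ket{v}=\Tr[P\rho]\le1$ for the rank-two projector $P$, so $2p-1\le0$; that is precisely why a uniform operator inequality is strictly stronger than the scalar statement and fails.)

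The underlying difficulty your route sidesteps is that $|\Tr[U\rho]|$ is convex, not concave, in $\rho$, so the pointwise bound $|\Tr[U\rho]|\ge\bra{g}\rho\ket{g}+\bra{v}\rho\ket{v}-1$ cannot be reduced to pure states by convexity, and cannot be obtained from a single $\rho$-independent operator inequality either. One can verify the scalar bound for pure $\rho$ via the triangle inequality for the Fubini--Study angle among $\ket{g}$, $\ket{\chi}$, $U\ket{\chi}$, but the mixed case is exactly where the work lies. The paper's proof confronts this directly: it compresses $\rho$ (not $U$) to the two-dimensional subspace $\mathrm{span}\{\ket{g},U^\dagger\ket{g}\}$, proves $|\Tr(\sigma T)|\ge 2p-t$ for the compressed state $\sigma$ of trace $t$ by an explicit matrix-element estimate with Cauchy--Schwarz, and then absorbs the rest of $\rho$ with H\"older's inequality, $|\Tr(\tau U)|\le 1-t$. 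To complete your argument you would need either a correct replacement for the operator inequality (e.g.\ one restricted to the relevant two-dimensional compression, which is essentially what the paper does) or a direct treatment of mixed $\rho$.
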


\begin{proof}
Let $\rho_S := \mathrm{Tr}_M(\ket{\psi_1}\!\bra{\psi_1})$, 
and define $\ket{h} := U^\dagger\ket{g}$. 
Let $\Pi$ be the projector onto the two-dimensional subspace
\begin{equation}
S := \mathrm{span}\{\ket{g}, \ket{h}\} \subseteq \mathcal{H}_S.
\end{equation}
Set $\sigma := \Pi \rho_S \Pi \succeq 0$ and $t := \mathrm{Tr}\,\sigma \in [0,1]$. 
Notice that
\begin{equation}\label{eq:diagsp}
\bra{g}\sigma\ket{g} = p_1, 
\qquad 
\bra{h}\sigma\ket{h} = p_2.
\end{equation}
Let $T := \Pi U \Pi : S \to S$. Then $T$ is a contraction 
($\|T\|_{\mathrm{op}} \le 1$) and 
\begin{equation}
T\ket{h} = \Pi U \ket{h} = \Pi \ket{g} = \ket{g}.
\end{equation}
Since $\Pi\sigma = \sigma$, we have
\begin{equation}
\mathrm{Tr}(\sigma T) = \mathrm{Tr}(\sigma \Pi U \Pi) = \mathrm{Tr}(\sigma U).
\end{equation}
We now show that
\begin{equation}\label{eq:bound_lemma}
|\mathrm{Tr}(\sigma T)| \ge 2p - t.
\end{equation}
Choose an orthonormal basis $\{\ket{g},\ket{g_\perp}\}$ of $S$ and write
\begin{equation}
\ket{h} = \cos\theta\,\ket{g} + e^{i\varphi}\sin\theta\,\ket{g_\perp},
\qquad
\ket{h_\perp} = -e^{-i\varphi}\sin\theta\,\ket{g} + \cos\theta\,\ket{g_\perp},
\end{equation}
with $\theta \in [0,\pi/2]$.  
In this basis, using \eqref{eq:diagsp} and $\mathrm{Tr}\,\sigma=t$, we have that
\begin{equation}
\sigma = 
\begin{pmatrix}
p_1 & z \\
z^* & t - p_1
\end{pmatrix},
\qquad z \in \mathbb{C}.
\end{equation}
Without loss of generality, we assume that $p_1$ is chosen so that $p_2 \ge p_1$. Define
\begin{equation}
a := \bra{g}\sigma\ket{h},
\qquad
b := \bra{g_\perp}\sigma\ket{h_\perp}.
\end{equation}
Expanding $\bra{h}\sigma\ket{h} = p_2$ yields
\begin{equation}
\mathrm{Re}(e^{i\varphi} z) 
\ge \frac{2p - t}{2}\cdot \frac{\sin\theta}{\cos\theta}.
\end{equation}
From this, one obtains
\begin{align}
\mathrm{Re}(a) + \mathrm{Re}(b) &= \cos\theta\, t, \nonumber\\
\mathrm{Re}(a) - \mathrm{Re}(b) &\ge \frac{2p - t}{\cos\theta}, \nonumber\\
\mathrm{Im}(a) &= \mathrm{Im}(b). 
\end{align}
Hence,
\begin{equation}\label{eq:bound_diff_module}
|a|^2 - |b|^2 
= (\mathrm{Re}(a) + \mathrm{Re}(b))(\mathrm{Re}(a) - \mathrm{Re}(b)) 
  + (\mathrm{Im}(a) + \mathrm{Im}(b))(\mathrm{Im}(a) - \mathrm{Im}(b))
\ge t(2p - t).
\end{equation}
Since $T$ is a contraction and $T\ket{h} = \ket{g}$, 
the ``defect'' operator $D := \mathbb{1}_S - T^\dagger T \succeq 0$  obeys $\bra{h}D\ket{h}=0$, which implies $D\ket{h}=0$ and
$T^\dagger T\ket{h}=h$. Thus for every $\ket{y}\perp \ket{h}$,
\begin{equation}
\braket{Ty}{Th} = \bra{y}T^\dagger T\ket{h} = \braket{y}{h} = 0.
\end{equation}
Thus $T\ket{h_\perp}$ is orthogonal to $\ket{g}$, 
and since $T$ is a contraction, $T\ket{h_\perp} = \beta\ket{g_\perp}$ 
for some $|\beta|\le 1$. Hence,
\begin{equation}
T\ket{h} = \ket{g}, \qquad T\ket{h_\perp} = \beta\,\ket{g_\perp}.
\end{equation}
This gives
\begin{equation}
\mathrm{Tr}(\sigma T) = a^* + \beta b^*.
\end{equation}
Therefore,
\begin{equation}
|\mathrm{Tr}(\sigma T)| 
= |a^* + \beta b^*| 
\ge \min_{|\beta|\le 1} |a^* + \beta b^*| 
= \max(|a| - |b|, 0).
\end{equation}
Using the bound from \eqref{eq:bound_diff_module},
\begin{equation}
|a| - |b| 
= \frac{|a|^2 - |b|^2}{|a| + |b|} 
\ge \frac{t(2p - t)}{|a| + |b|}.
\end{equation}
By applying the Cauchy--Schwarz inequality to the $\sigma$-inner product $\langle v, w \rangle_\sigma := \braket{v}{\sigma|w}$ (defined on the support of $\sigma$), we obtain $|a| \le \sqrt{p_1 p_2}$ and $|b| \le \sqrt{(t - p_1)(t - p_2)}$. Consequently, $|a| + |b| \le t$, since $p \le t$. Thus, when $2p-t\ge 0$,
\begin{equation}
|a|-|b|\geq \frac{t(2p-t)}{t}=2p-t.
\end{equation}
If $2p-t\le0$, the right-hand side of \eqref{eq:bound_lemma} is nonpositive and the inequality is trivial. Thus,
\begin{equation}
|\mathrm{Tr}(\sigma T)| \ge 2p - t.
\end{equation}
Finally, decompose $\rho_S = \sigma + \tau$, 
where $\tau := (\mathbb{1}_{S} - \Pi)\rho_S(\mathbb{1}_{S} - \Pi) \succeq 0$ 
and $\mathrm{Tr}\,\tau = 1 - t$. 
By the reverse triangle inequality and Hölder’s inequality,
\begin{equation}
|\braket{\psi_1}{\psi_2}|
= |\mathrm{Tr}(\rho_S U)|
\ge |\mathrm{Tr}(\sigma U)| - |\mathrm{Tr}(\tau U)|
\ge (2p - t) - (1 - t)
= 2p - 1.
\end{equation}
\end{proof}
\bigskip

Thm. \ref{th:unitary channels} from the main text follows form Thm. \ref{thm:mixed-states} by setting $\ket{\psi_1}=(U_1\otimes\mathbb{1}_{M})\ket{\psi}$ and $U=U_2U_1^\dagger$.

\section{Mapping Between the Channel-Based and State-Based Problems}\label{App:formulation equivalence}
    We are concerned with the optimization \begin{equation}
    \begin{aligned}
         \max_{\mathbf{p}\in\Qent} I_\text{corr}(\mathbf{p})&=\max_{\substack{\ket{g}_S,\{\Pi_{SM}^{b}\}_b, \\[1pt] \sigma_{PM}, \{\Lambda_x\}_x}} \sum_{b,x}(-1)^{b\oplus x}\Tr\left[\Pi_{SM}^b(\Lambda_x\otimes\mathbb{1}_M)\left(\sigma_{PM}\right)\right],
    \end{aligned}
    \end{equation}
    subject to the constraint
    \begin{equation}
        \Tr\!\left[(\ket{g}_S\bra{g}\otimes\mathbb{1}_{M})(\Lambda_x \otimes \mathbb{1}_{M})(\sigma_{PM})\right] \geq 1-\omega.
    \end{equation}
    We consider an equivalent formulation given by the state-channel duality \cite{choi1975completely} in which, instead of optimizing explicitly over the local channels $\{\Lambda_x\}_x$, we optimize over the resulting states $\{\rho_{SM}^x\}_x$ on systems $SM$. That is,
    \begin{align}
        \max_{\mathbf{p}\in\Qent} I_\text{corr}(\mathbf{p})&=\max_{\substack{\ket{g}_S,\{\Pi_{SM}^{b}\}_b, \\[1pt] \{\rho_{SM}^x\}_x}} \sum_{b,x}(-1)^{b\oplus x}\Tr\left[\Pi_{SM}^b\rho_{SM}^x\right], 
    \end{align}
    with constraints
    \begin{equation}
    \begin{aligned}
        \Tr\!\left[(\ket{g}_{S}\bra{g} \otimes \mathbb{1}_{M})\rho_{SM}^{x}\right] &\geq 1-\omega, \\
        \Tr_S[\rho_{SM}^{0}] &= \Tr_S[\rho_{SM}^{1}].
    \end{aligned}
    \end{equation}
    \section{Analytical constructions}\label{App:analytical state}
        Based on the numerical evidence, we assume that the state prepared with input $x=0$ is the pure state
    \begin{equation}\label{eq:pure_state}
        \ket{\psi^{0,\omega}}_{SM}=\sqrt{1-\omega}\ket{00}+\sqrt{\omega}\left(\sqrt{p}\ket{10}+\sqrt{1-p}\ket{11}\right).
    \end{equation}
    Imposing this form does not affect the maximization of $I_\text{corr}=|E_0 - E_1|$, where $E_0$ and $E_1$ are the correlators defined as $E_x = p(0|x)-p(1|x)$. From the numerical results, we further observe that the maximum of $I_\text{corr}$ can be attained by constraining $\rho^{1,\omega}_{SM}$ to have the spectral decomposition
    \begin{equation}
        \rho^{1,\omega}_{SM}=q\ket{\phi^\omega}\bra{\phi^\omega}+(1-q)\ket{01}\bra{01},
    \end{equation}
    where $\ket{\phi^\omega}_{SM}$ can be expanded in the computational basis with real coefficients as
    \begin{equation}
        \ket{\phi^\omega}_{SM}=\pm \sqrt{a}\ket{00}\pm \sqrt{b}\ket{10}\pm\sqrt{1-a-b}\ket{11}.
    \end{equation}
    The parameters $a,b,q$ are determined by the constraints on the energy and on the reduced state. Letting $\ket{g}=\ket{0}$, the energy condition reads
    \begin{equation}\label{eq:energy}
        \Tr[(\ket{0}\bra{0}\otimes\mathbb{1}_{M})\rho^{1,\omega}_{SM}]=1-\omega 
        \iff qa+1-q=1-\omega . 
    \end{equation}
   The marginal condition requires
    \begin{equation}\label{eq:marginal_state}
    \begin{aligned}
        &\Tr_{S}[\rho^{1,\omega}_{SM}] = \Tr_{S}[\ket{\psi^{0,\omega}}_{SM}\bra{\psi^{0,\omega}}] \iff \\
        &\quad q(a+b)\ket{0}\bra{0} + \big[q(1-a-b) + 1 - q\big] \ket{1}\bra{1} \\
        &\quad \pm q\sqrt{b(1-a-b)} (\ket{0}\bra{1} + \ket{1}\bra{0}) \\
        &= (1-\omega+\omega p)\ket{0}\bra{0} + \omega(1-p)\ket{1}\bra{1} \\
        &\quad + \omega\sqrt{p(1-p)} (\ket{0}\bra{1} + \ket{1}\bra{0}).
    \end{aligned}
    \end{equation}
    In Eq.~\eqref{eq:marginal_state}, the sign $\pm$ depends on whether the coefficients of $\ket{10}$ and $\ket{11}$ in $\ket{\phi^\omega}_{SM}$ have the same or opposite sign. This condition can be expressed as
    \begin{subequations}\label{eq:marginal}
    \begin{align}
        q(a+b)&=1-\omega +\omega p,\label{eq:marginal_a}\\
        q(1-a-b)+a-q&=\omega(1-p),\\
        \pm q\sqrt{b(1-a-b)}&=\omega\sqrt{p(1-p)}. \label{eq:marginal_c}
    \end{align}
    \end{subequations}
    Note that Eq.~\eqref{eq:marginal_c} cannot be satisfied in the case of the minus sign. Additionally, it is natural to expect that the sign of the coefficient of $\ket{00}$ differs from the signs of the other two coefficients, in order to minimize the fidelity with $\ket{\psi^{0,\omega}}_{SM}$. We can therefore assume that the state $\ket{\phi^{\omega}}_{SM}$ is
    \begin{equation}
        \ket{\phi^{\omega}}_{SM}=- \sqrt{a}\ket{00} + \sqrt{b}\ket{10}+\sqrt{1-a-b}\ket{11}.
    \end{equation}
    The four constraints obtained are not independent and we can consider the set given by Eqs.~\eqref{eq:energy},~\eqref{eq:marginal_a},~\eqref{eq:marginal_c}. Solving this set, we obtain two solutions. The first is $(a,b,q)=(1-\omega,p\omega,1)$ that doesn't give any advantage and the second is
    \begin{equation}\label{eq:parameters}
    \begin{aligned}
        a&=(2p\omega-2\omega+1)/(2p\omega-\omega+1),\\
        b&=\omega(1-p)/(2p\omega-\omega+1),\\
        q&=2p\omega-\omega+1.
    \end{aligned}
    \end{equation}
    Note that, in this way, we are not constraining $a,b$ and $q$ to be probabilities. $0\leq a\leq1$, $0\leq b\leq1$ and $q\geq0$ are satisfied $\forall p\in[0,1]$ and $\forall \omega\in[0,1/2]$, but $q\leq1\implies p\leq 1/2$. Expliciting the terms in Eq.~\eqref{eq:parameters}, we can rewrite the state $\rho^{1,\omega}_{SM}$ as
    \begin{equation}
    \begin{aligned}
        \rho^{1,\omega}_{SM}=&(2p\omega-2\omega+1)\ket{00}\bra{00}+(\omega-2p\omega)\ket{01}\bra{01}\\
        &+\omega(1-p)\ket{10}\bra{10}+\omega p\ket{11}\bra{11}\\
        &-\sqrt{(2p\omega-2\omega+1)\omega(1-p)}\left[\ket{00}\bra{10}+\ket{10}\bra{00}\right]\\
        &-\sqrt{(2p\omega-2\omega+1)\omega p}\left[\ket{00}\bra{11}+\ket{11}\bra{00}\right]\\
        &+\omega\sqrt{p(1-p)}\left[\ket{10}\bra{11}+\ket{11}\bra{10}\right].
    \end{aligned}
    \end{equation}

    This construction of the prepared states can be recast into an analytical framework involving an initial entangled state and two quantum channels. Specifically, we assume that the state $\ket{\psi^{0,\omega}}_{SM}$ defined in Eq.~\eqref{eq:pure_state} is the initial shared state. In this setting, one of the two channels can be taken as the identity, i.e., $\Lambda_{0,\omega} = \mathbb{1}$. To determine the other channel, we proceed as follows
    \begin{equation} \label{eq:channel_kraus}
    \begin{aligned}
        \rho_{SM}^{1,\omega} &= (\Lambda_{1,\omega} \otimes \mathbb{1}_{M})(\sigma_{PM}) \\
                     &= (\Lambda_{1,\omega} \otimes \mathbb{1}_{M})(\ket{\psi^{0,\omega}}_{SM}\!\bra{\psi^{0,\omega}}) \\
                     &= \sum_i (E_i^{1,\omega} \otimes \mathbb{1}_{M}) \ket{\psi^{0,\omega}}_{SM}\!\bra{\psi^{0,\omega}} ((E_i^{1,\omega})^\dagger \otimes \mathbb{1}_{M}) \\
                     &= \sum_i (E_i^{1,\omega} A_{\omega} \otimes \mathbb{1}_{M}) \ket{\phi^+}_{SM}\!\bra{\phi^+} (A_\omega^\dagger (E_i^{1,\omega})^\dagger \otimes \mathbb{1}_{M}),
    \end{aligned}
    \end{equation}
    where the last line uses the Kraus representation of $\Lambda_{1,\omega}$ and the fact that any bipartite pure state can be expressed as a local transformation of the maximally entangled state $\ket{\phi^+}_{SM}$.
    
    Consider now the spectral decomposition of $\rho_{SM}^{1,\omega}$
    \begin{equation}
        \rho_{SM}^{1,\omega} = \sum_i \lambda_i^{1,\omega} \ket{\eta_i^{1,\omega}}\!\bra{\eta_i^{1,\omega}}.
    \end{equation}
    Each eigenvector $\ket{\eta_i^{1,\omega}}$ can itself be realized by applying a local channel $\tilde{\Lambda}_i^{1,\omega}$ on either subsystem $S$ or $M$ of $\ket{\phi^+}_{SM}$
    \begin{equation} \label{eq:channel_spectrum}
        \rho_{SM}^{1,\omega} = \sum_i \lambda_i^{1,\omega} \, (\tilde{\Lambda}_i^{1,\omega} \otimes \mathbb{1}_{M}) \ket{\phi^+}_{SM}\!\bra{\phi^+} ((\tilde{\Lambda}_i^{1,\omega})^\dagger \otimes \mathbb{1}_{M}).
    \end{equation}
    By comparing Eqs.~\eqref{eq:channel_kraus} and \eqref{eq:channel_spectrum}, it follows that each Kraus operator can be expressed as
    \begin{equation}
        E_i^{1,\omega} = \sqrt{\lambda_i^{1,\omega}}\, \tilde{\Lambda}_i^{1,\omega} A^{-1}_{\omega}.
    \end{equation}
    To explicitly identify the operator $A_{\omega}$, we write the Schmidt decomposition of $\ket{\psi^{0,\omega}}_{SM}$
    \begin{equation}
        \ket{\psi^{0,\omega}}_{SM} = \sqrt{\alpha_{\omega}}\, \ket{\eta_{\omega}} \otimes \ket{0} + \sqrt{\beta_{\omega}}\, \ket{\eta_{\omega}^\perp} \otimes \ket{1}.
    \end{equation}
    From this decomposition, the operator $A_{\omega}$ is given by
    \begin{equation}
        A_{\omega} = \sqrt{2\alpha_{\omega}}\, \ket{\eta_{\omega}}\!\bra{0} + \sqrt{2\beta_{\omega}}\, \ket{\eta_{\omega}^\perp}\!\bra{1}.
    \end{equation}
    Indeed, applying $A_{\omega} \otimes \mathbb{1}_{M}$ to the maximally entangled state $\ket{\phi^+}_{SM}$ reproduces $\ket{\psi^{0,\omega}}_{SM}$
    \begin{equation}
    \begin{aligned}
        (A_{\omega} \otimes \mathbb{1}_{M}) \ket{\phi^+}_{SM} 
        &= \left(\left(\sqrt{2\alpha_{\omega}}\, \ket{\eta_{\omega}}\!\bra{0} + \sqrt{2\beta_{\omega}}\, \ket{\eta_{\omega}^\perp}\!\bra{1}\right) \otimes \mathbb{1}_{M}\right) \cdot \frac{1}{\sqrt{2}} (\ket{00} + \ket{11}) \\
        &= \sqrt{\alpha_{\omega}}\, \ket{\eta_{\omega}} \otimes \ket{0} + \sqrt{\beta_{\omega}}\, \ket{\eta_{\omega}^\perp} \otimes \ket{1} \\
        &= \ket{\psi^{0,\omega}}_{SM}.
    \end{aligned}
    \end{equation}
    In a similar way, the channels \(\{\tilde{\Lambda}_i^{1,\omega}\}_i\) can be associated to each eigenvector \(\ket{\eta_i^{1,\omega}}\), thereby completing the reconstruction of the Kraus $\{E_i^{1,\omega}\}_i$.

    The parameter $p$ used in the analytical construction of the states and channels is determined as a function of $\omega$ by optimizing $I_{\text{corr}}$ in Eq.~\eqref{eq:icorr} for Fig.~\ref{fig:EA_analytic}, and the upper bound to $P_{\text{adv}}^{ec}(\Lambda_{\omega})$ in Eq.~\eqref{eq:padv-lower-bound-ec} for Fig.~\ref{fig:guess_prob_adv}.

\section{Quantum guessing probability}\label{App:quantum_guess_prob}
The most general attach that an adversary could do is to hold a purification of the state and perform a measurement on his part. The \emph{quantum guessing probability} is then defined as
\begin{equation}
    P_{\text{guess}}^{Q}(B\mid E,X=x^*,I_\text{corr}^\text{exp},\{\omega_x\}_x)=\max_{\substack{\ket{g}_S,\{\Pi_{SM}^{b}\}_{b},\; \{M_{E}^{e}\}_e,\\
    \{\Lambda_x\}_x,\ket{\psi}_{PME}}}\sum_e\Tr[(\Pi_{SM}^{e}\otimes M_{E}^{e})(\Lambda_{x^*}\otimes\mathbb{1}_{ME})(\ket{\psi}_{PME}\bra{\psi})],
\end{equation}
subject to
\begin{equation}
    \begin{aligned}
        &\Tr[(\ket{g}_{S}\bra{g}\otimes\mathbb{1}_{ME})(\Lambda_x\otimes\mathbb{1}_{ME})\left(\ket{\psi}_{PME}\bra{\psi}\right)]\geq 1-\omega_x,\\
        &\sum_{b,x}(-1)^{b\oplus x}\Tr\left[\Pi_{SM}^b(\Lambda_x\otimes\mathbb{1}_{ME})(\ket{\psi}_{PME}\bra{\psi})\right]\geq I_\text{corr}^\text{exp}.
    \end{aligned}
\end{equation}
This problem can be reformulated by introducing the post-measurement states 
\begin{equation}
    \sigma_{PM}^e=\Tr_E[(\mathbb{1}_{PM}\otimes M_E^e)(\ket{\psi}_{PME}\bra{\psi})]/p(e),
\end{equation}
which can be interpreted as the state held by the honest party after Eve has performed her measurement and obtained outcome $e$. By substituting this new quantity, we obtain
\begin{equation}
    P_{\text{guess}}^{Q}(B\mid E,X=x^*,I_\text{corr}^\text{exp},\{\omega_x\}_x)=\max_{\ket{g}_S,\{\Pi_{SM}^{b}\}_{b},\{p(e),\sigma_{PM}^{e}\}_{e}}\sum_{e}p(e)\Tr[\Pi_{SM}^{e}(\Lambda_{x^*}\otimes\mathbb{1}_M)(\sigma_{PM}^{e})],
\end{equation}
subject to
\begin{equation}
    \begin{aligned}
        &\Tr[(\ket{g}_{S}\bra{g}\otimes\mathbb{1}_M)(\Lambda_x\otimes\mathbb{1}_{M})\left(\sum_e p(e)\sigma_{PM}^e\right)]\geq 1-\omega_x,\\
        &\sum_{b,x}(-1)^{b\oplus x}\Tr\left[\Pi_{SM}^b(\Lambda_x\otimes\mathbb{1}_M)\left(\sum_e p(e)\sigma_{PM}^e\right)\right]\geq I_\text{corr}^\text{exp}.
    \end{aligned}
\end{equation}

\section{Lasserre Hierarchy}\label{App:Lasserre hierarchy}
Let 
\(\mathbb{R}[x] = \mathbb{R}[x_1, \dots, x_n]\) 
denote the space of real polynomials in \(n\) variables. Consider an arbitrary polynomial
\begin{equation}
    f(x) = \sum_{\alpha \in \mathbb{N}^n} f_\alpha \, x^\alpha \in \mathbb{R}[x].
\end{equation}
Any sequence of real numbers 
\(\mathbf{y} = \{y_\alpha\}_{\alpha \in \mathbb{N}^n}\), indexed by multi-indices corresponding to the monomials \(x^\alpha\), defines a linear functional 
\(L_\mathbf{y} : \mathbb{R}[x] \to \mathbb{R}\) via
\begin{equation}
    L_\mathbf{y}(f) := \sum_{\alpha \in \mathbb{N}^n} f_\alpha \, y_\alpha.
\end{equation}
In this way, one can identify the dual space of \(\mathbb{R}[x]\) with the space of all real sequences \(\mathbf{y} = \{y_\alpha\}_\alpha\). If $\mathbf{y}$ arises from a probability measure $\mu$ supported on a set $K \subseteq \mathbb{R}^n$, namely
\begin{equation}
    y_\alpha = \int_K x^\alpha \, d\mu(x),
\end{equation}
then $L_\mathbf{y}(f)$ coincides with the integral of $f$ against $\mu$. This duality motivates the \emph{Lasserre hierarchy}, a sequence of semidefinite relaxations for polynomial optimization problems of the form
\begin{equation}
    p^\star = \inf_{x \in K} f(x),
\end{equation}
where $K$ is defined by polynomial inequalities $g_j(x) \geq 0$. Clearly, if $\mu$ is supported on $K$, one has $p^\star \le L_\mathbf{y}(f)$, so optimizing over feasible moment sequences yields a relaxation of the original problem. For a given relaxation order $d$, one restricts to truncated sequences $\mathbf{y} = \{y_\alpha\}_{|\alpha|\le 2d}$. Feasibility of $\mathbf{y}$ is enforced via semidefinite constraints. The \emph{moment matrix} $M_d(\mathbf{y})$ with entries
\begin{equation}
    M_d(\mathbf{y})(\alpha,\beta) = L_\mathbf{y}(x^{\alpha+\beta}), \quad |\alpha|,|\beta|\le d,
\end{equation}
must be positive semidefinite to guarantee $L_\mathbf{y}(p(x)^2)\ge 0$ for all polynomials $p$ of degree at most $d$. Similarly, for each polynomial inequality $g_j(x)\ge 0$, the \emph{localizing matrix}
\begin{equation}
M_{d-d_j}(g_j \mathbf{y})(\alpha,\beta) = L_\mathbf{y}\!\big(g_j(x)\, x^{\alpha+\beta}\big), 
\qquad d_j = \lceil \deg(g_j)/2 \rceil,
\end{equation}
is required to be positive semidefinite to enforce $L_\mathbf{y}(g_j(x)p(x)^2)\ge 0$ \cite{laurent2008sums}. The resulting level-$d$ relaxation is the semidefinite program
\begin{equation}
    \begin{aligned}
    \min_{y}\quad & \sum_{|\alpha|\le 2d} f_\alpha \,y_\alpha,\\
    \text{s.t.}\quad 
    & M_d(\mathbf{y})\;\succeq\;0,\\
    & M_{d-d_j}(g_j\,\mathbf{y})\;\succeq\;0,\quad j=1,\dots,m,\\
    & y_0 = 1.
    \end{aligned}
\end{equation}
Under mild conditions, such as compactness of $K$ (typically ensured via the Archimedean property), the sequence of SDP optimal values $\{p_d^\star\}_d$ is nondecreasing and converges to the global optimum $p^\star$ as $d\to\infty$ \cite{laurent2008sums}.\\
To apply this framework to the induced trace norm in Eq.~\ref{induced_trace_norm}, we parametrize the independent entries of the state $\rho$ and operator $M$ as
\begin{equation}
    \rho = \begin{pmatrix}
    \rho_{00} & \rho_{01}^{R}+i\rho_{01}^{I} \\
    \rho_{10}^{R}+i\rho_{10}^{I} & 1-\rho_{00}
    \end{pmatrix},\qquad
    M = \begin{pmatrix}
    M_{00} & M_{01}^{R}+iM_{01}^{I} \\
    M_{10}^{R}+iM_{10}^{I} & M_{11}
    \end{pmatrix},
\end{equation}
where hermiticity and the normalization $\Tr[\rho]=1$ have been taken into account. This reduces the problem to an optimization over seven distinct real parameters, which makes it computationally tractable. While this approach is feasible for small Hilbert spaces, computational cost grows rapidly with dimension. 

\end{document}